\newtheorem{lemma}{Lemma}
\newtheorem{theorem}{Theorem}
\newtheorem{corollary}{Corollary}
\newtheorem{definition}{Definition}
\newtheorem{remark}{Remark}
\newtheorem{claim}{Claim}
\newcommand{\mmfvs}{\ensuremath\mathrm{mmfvs}}
\journal{Journal of Computer and System Sciences}
\begin{document}

\begin{frontmatter}

\title{(In)approximability of Maximum Minimal FVS\tnoteref{preversion}}

\tnotetext[preversion]{A
preliminary version of the paper appeared in Proceedings of the 31st International Symposium on Algorithms and Computation (ISAAC 2020), Vol.~181 of Leibniz International Proceedings in Informatics (LIPIcs),
Schloss Dagstuhl--Leibniz-Zentrum f{\"u}r Informatik, pp. 3:1--3:14 (2020) \cite{Confpaper}.}

\author[LAMSADE]{Louis Dublois}
\ead{louis.dublois@gmail.com}

\author[chuo]{Tesshu Hanaka}
\ead{hanaka.91t@g.chuo-u.ac.jp}

\author[LAMSADE]{Mehdi Khosravian Ghadikolaei}
\ead{m.khosravian@gmail.com}

\author[LAMSADE]{Michael Lampis}
\ead{michail.lampis@dauphine.fr}

\author[LAMSADE]{Nikolaos Melissinos}
\ead{nikolaos.melissinos@dauphine.eu}

\address[LAMSADE]{Universit\'e Paris-Dauphine, PSL Research University, CNRS, UMR 7243, LAMSADE, Paris, France}
\address[chuo]{Chuo University, Tokyo, Japan}

\begin{abstract}
We study the approximability of the NP-complete \textsc{Maximum Minimal
Feedback Vertex Set} problem. Informally, this natural problem seems to lie in
an intermediate space between two more well-studied problems of this type:
\textsc{Maximum Minimal Vertex Cover}, for which the best achievable
approximation ratio is $\sqrt{n}$, and \textsc{Upper Dominating Set}, which
does not admit any $n^{1-\epsilon}$ approximation. We confirm and quantify this
intuition by showing the first non-trivial polynomial time approximation for
\textsc{Max Min FVS} with a ratio of $O(n^{2/3})$, as well as a matching hardness
of approximation bound of $n^{2/3-\epsilon}$, improving the previous known
hardness of $n^{1/2-\epsilon}$. 
The approximation algorithm also gives a cubic kernel when parameterized by the solution size.
Along the way, we also obtain an
$O(\Delta)$-approximation and show that this is asymptotically best possible,
and we improve the bound for which the problem is NP-hard from $\Delta\ge 9$ to
$\Delta\ge 6$. 

Having settled the problem's approximability in polynomial time, we move to the
context of super-polynomial time. We devise a generalization of our
approximation algorithm which, for any desired approximation ratio $r$,
produces an $r$-approximate solution in time $n^{O(n/r^{3/2})}$. This
time-approximation trade-off is essentially tight: we show that under the ETH,
for any ratio $r$ and $\epsilon>0$, no algorithm can $r$-approximate this
problem in time $n^{O((n/r^{3/2})^{1-\epsilon})}$, hence we precisely
characterize the approximability of the problem for the whole spectrum between
polynomial and sub-exponential time, up to an arbitrarily small constant in the
second exponent. 
\end{abstract}

\begin{keyword}
Approximation Algorithms, ETH, Inapproximability
\end{keyword}

\end{frontmatter}


\section{Introduction}\label{sec:intro}

In a graph $G=(V,E)$, a set $S\subseteq V$ is called a \emph{feedback vertex
set} (fvs for short) if the subgraph induced by $V\setminus S$ is a forest.
Typically, fvs is studied with a minimization objective: given a graph we are
interested in finding the best (that is, smallest) fvs.  In this paper we are
interested in an objective which is, in a sense, the inverse: we seek an fvs
$S$ which is as \emph{large} as possible, while still being minimal. We call
this problem \textsc{Max Min FVS}.

MaxMin and MinMax versions of many famous optimization problems have recently
attracted much interest in the literature (we give references below) and
\textsc{Max Min FVS} can be seen as a member of this framework.  Although the
initial motivation for studying such problems was a desire to analyze the worst
possible performance of a naive heuristic, these problems have gradually been
revealed to possess a rich combinatorial structure that makes them interesting
in their own right. Our goal in this paper is to show that \textsc{Max Min FVS}
displays an interesting complexity behavior with respect to its
approximability.

Our motivation for focusing on \textsc{Max Min FVS} is the contrast between two
of its more well-studied cousins: the \textsc{Max Min Vertex Cover} and
\textsc{Upper Dominating Set} problems, where the objective is to find the
largest minimal vertex cover or dominating set respectively. At first glance,
one would expect \textsc{Max Min VC} to be the easier of these two problems:
both problems can be seen as trying to find the largest minimal hitting set of
a hypergraph, but in the case of \textsc{Max Min VC} the hypergraph has a very
restricted structure, while in \textsc{UDS} the hypergraph is essentially
arbitrary. This intuition turns out to be correct: while \textsc{UDS} admits no
$n^{1-\epsilon}$-approximation \cite{Bazgan2018}, \textsc{Max Min VC} admits a
$\sqrt{n}$-approximation (but no $n^{1/2-\epsilon}$-approximation)
\cite{Boria2015}.

This background leads us to the natural question of the approximability of
\textsc{Max Min FVS}. On an intuitive level, one may be tempted to think that
this problem should be harder than \textsc{Max Min VC}, since hitting cycles is
more complex than hitting edges, but easier than \textsc{UDS}, since hitting
cycles still offers us more structure than an arbitrary hypergraph. However, to
the best of our knowledge, no $n^{1-\epsilon}$-approximation algorithm is
currently known for \textsc{Max Min FVS} (so the problem could be as hard as
\textsc{UDS}), and the best hardness of approximation bound known is
$n^{1/2-\epsilon}$ \cite{MishraS01} (so the problem could be as easy as
\textsc{Max Min VC}).

Our main contribution in this paper is to fully answer this question, confirming and
precisely quantifying the intuition that \textsc{Max Min FVS} is a
problem that lies ``between'' \textsc{Max Min VC} and \textsc{UDS}: We give a
polynomial-time approximation algorithm with ratio $O(n^{2/3})$ and a hardness
of approximation reduction which shows that (unless $\text{P}=\text{NP}$) no polynomial-time
algorithm can obtain a ratio of $n^{2/3-\epsilon}$, for any $\epsilon>0$. This
completely settles the approximability of the problem in polynomial time. 
Along the way, we also prove that  \textsc{Max Min FVS} admits a cubic kernel when parameterized by the solution size, give an approximation algorithm with ratio $O(\Delta)$, show
that no algorithm can achieve ratio $\Delta^{1-\epsilon}$, for any
$\epsilon>0$, and improve the best known NP-completeness proof for \textsc{Max
Min FVS} from $\Delta\ge 9$ \cite{MishraS01} to $\Delta \ge 6$, where $\Delta$
is the maximum degree of the input graph.

One interesting aspect of our results is that they have an interpretation from
extremal combinatorics which nicely mirrors the situation for \textsc{Max Min
VC}. Recall that a corollary of the $\sqrt{n}$-approximation for \textsc{Max
Min VC} \cite{Boria2015} is that any graph without isolated vertices has a
minimal vertex cover of size at least $\sqrt{n}$, and this is tight (see Remark
\ref{rem:vc}). Hence, the algorithm only needs to trivially preprocess the
graph (deleting isolated vertices) and then find this set, which is guaranteed
to exist.  Our algorithms can be seen in a similar light: we prove that if one
applies two almost trivial pre-processing rules to a graph (deleting leaves and
contracting edges between degree-two vertices), a minimal fvs of size at least
$n^{1/3}$ (and $\Omega(n/\Delta)$) is always guaranteed to exist, and this is
tight (Corollary \ref{cor:extr} and Remark \ref{rem:fvs}).  Thus, the
approximation ratio of $n^{2/3}$ is automatically guaranteed for any graph
where we exhaustively apply these very simple rules and our algorithms only
have to work to construct the promised set. This makes it somewhat remarkable
that the ratio of $n^{2/3}$ turns out to be best possible.

Having settled the approximability of \textsc{Max Min FVS} in polynomial time,
we consider the question of how much time needs to be invested if one wishes to
guarantee an approximation ratio of $r$ (which may depend on $n$) where $r
<n^{2/3}$. This type of time-approximation trade-off was extensively studied by
Bonnet et al. \cite{Bonnet2018}, who showed that \textsc{Max Min Vertex Cover}
admits an $r$-approximation in time $2^{O(n/r^2)}$ and this is optimal under
the randomized ETH.

For \textsc{Max Min FVS} we cannot hope to obtain a trade-off with performance
exponential in $n/r^2$, as this implies a polynomial-time
$\sqrt{n}$-approximation. It therefore seems more natural to aim for a running
time exponential in $n/r^{3/2}$. Indeed, generalizing our polynomial-time
approximation algorithm, we show that we can achieve an $r$-approximation in
time $n^{O(n/r^{3/2})}$. Although this algorithm reuses some ingredients from
our polynomial-time approximation, it is significantly more involved, as it is
no longer sufficient to compare the size of our solution to $n$. We complement
our result with a lower bound showing that our algorithm is essentially best
possible under the randomized ETH for any $r$ (not just for polynomial time),
or more precisely that the exponent of the running time of our algorithm can
only be improved by $n^{o(1)}$ factors.

\bigskip
 \noindent{\bf Related work} To the best of our knowledge, \textsc{Max Min FVS}
was first considered by Mishra and Sikdar \cite{MishraS01}, who showed that the problem does not admit an $n^{1/2-\epsilon}$ approximation (unless $\text{P}=\text{NP}$), and that it remains APX-hard for $\Delta\ge 9$. On the other hand, \textsc{UDS} and
\textsc{Max Min VC} are well-studied problems, both in the context of
approximation and in the context of parameterized complexity
\cite{AbouEishaHLMRZ18,Bazgan2018,Boria2015,BoyaciM17,ChestonFHJ90,CourcelleMR00,Demange1999,HenningP20,JacobsonP90,Zehavi2017,ZZ1995}.
Many other classical optimization problems have recently been studied in the MaxMin or MinMax framework, such as \textsc{Max Min Separator}
\cite{Hanaka2019}, \textsc{Max Min Cut} \cite{EtoHKK2019}, \textsc{Min Max
Knapsack} (also known as the \textsc{Lazy Bureaucrat Problem})
\cite{ArkinBMS03,Furini2017,GourvesMP13}, and some variants of \textsc{Max Min Edge Cover} \cite{KhoshkhahGMS20, DBLP:conf/ttcs/HarutyunyanGMMP20}. Some problems in this area also arise naturally in other
forms and have been extensively studied, such as \textsc{Min Max Matching}
(also known as \textsc{Edge Dominating Set} \cite{IwaideN16}),
\textsc{Grundy Coloring}, which can be seen as a Max Min version of
\textsc{Coloring} \cite{AboulkerB0S20,BelmonteKLMO20}, and \textsc{Max Min VC} in hypergraphs, which is known as \textsc{Upper Transversal}\cite{DBLP:conf/wg/MisraRRS12,
DBLP:journals/combinatorics/HenningY18, DBLP:journals/ejc/HenningY19, DBLP:journals/jco/HenningY20}. 

The idea of designing super-polynomial time approximation algorithms which
obtain guarantees better than those possible in polynomial time has attracted
much attention in the last decade
\cite{BansalCLNN19,BourgeoisEP09,CyganKW09,CyganP10,EscoffierPT14,FotakisLP16,KatsikarelisLP19}.
As mentioned, the result closest to the time-approximation trade-off we give in
this paper is the approximation algorithm for \textsc{Max Min VC} given by
Bonnet et al. \cite{Bonnet2018}. It is important to note that such trade-offs
are only generally known to be tight up to poly-logarithmic factors in the
exponent of the running time. As explained in \cite{Bonnet2018}, current lower
bound techniques can rule out improvements in the running time that shave at
least $n^\epsilon$ from the exponent, but not improvements which shave
poly-logarithmic factors, due to the state of the art in quasi-linear PCP
constructions. Indeed, such improvements are sometimes possible
\cite{BansalCLNN19} and are conceivable for \textsc{Max Min VC} and \textsc{Max
Min FVS}. Lower bounds for this type of algorithm rely on the (randomized)
Exponential Time Hypothesis (ETH), which states that there is no (randomized)
algorithm for \textsc{3-SAT} running in time $2^{o(n)}$.

\section{Preliminaries}

We use standard graph-theoretic notation and only consider simple loop-less
graphs.  For a graph $G=(V,E)$ and $S\subseteq V$ we denote by $G[S]$ the graph
induced by $S$. For $u\in V$, $G-u$ is the graph $G[V\setminus\{u\}]$. We write
$N(u)$ to denote the set of neighbors of $u$ and $d(u)=|N(u)|$ to denote its
degree. For $S\subseteq V$, $N(S)=\cup_{u\in S}N(u)\setminus S$. We use
$\Delta(G)$ (or simply $\Delta$) to denote the maximum degree of $G$. For
$uv\in E$ the graph $G/uv$ is the graph obtained by contracting the edge $uv$,
that is, replacing $u,v$ by a new vertex connected to $N(u)\cup N(v)$. In this
paper we will only apply this operation when $N(u)\cap N(v)=\emptyset$, so the
result will always be a simple graph.  

A forest is a graph that does not contain cycles. A feedback vertex set (fvs
for short) is a set $S\subseteq V$ such that $G[V\setminus S]$ is a forest. An
fvs $S$ is minimal if no proper subset of $S$ is an fvs. It is not hard to see
that if $S$ is minimal, then every $u\in S$ has a \emph{private cycle}, that
is, there exists a cycle in $G[(V\setminus S)\cup \{u\}]$, which goes through
$u$.  A vertex $u$ of a feedback vertex set $S$ that does not have a private
cycle (that is, $S\setminus\{u\}$ is also an fvs), is called \emph{redundant}.
For a given fvs $S$, we call the set $F:=V\setminus S$ the corresponding
induced forest. If $S$ is minimal, then $F$ is maximal.

The main problem we are interested in is \textsc{Max Min FVS}: given a
graph $G=(V,E)$, find a minimal fvs of $G$ of maximum size.  Since this problem
is NP-hard, we will be interested in approximation algorithms. An approximation
algorithm with ratio $r\ge 1$ (which may depend on $n$, the order of the graph)
is an algorithm which, given a graph $G$, returns a solution of size at least
$\frac{\mmfvs(G)}{r}$, where $\mmfvs(G)$ is the size of the largest minimal fvs
of $G$.

We make two basic observations about our problem: deleting vertices or
contracting edges can only decrease the size of the optimal solution.

\begin{lemma}\label{lem:ur1} Let $G=(V,E)$ be a graph and $u\in V$. Then,
$\mmfvs(G)\ge \mmfvs(G-u)$. Furthermore, given any minimal feedback vertex set
$S$ of $G-u$, it is possible to construct in polynomial time a minimal feedback
vertex set of $G$ of the same or larger size. \end{lemma}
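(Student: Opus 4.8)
The plan is to prove the ``furthermore'' assertion directly; the bound $\mmfvs(G)\ge\mmfvs(G-u)$ then follows by feeding a maximum minimal fvs of $G-u$ into the construction. So fix a minimal fvs $S$ of $G-u$ (note $u\notin S$). The construction tests, in polynomial time, whether $S$ is already an fvs of $G$, i.e.\ whether $G[V\setminus S]$ is a forest, and branches on the answer. Throughout I will use the standard equivalence that an fvs is minimal iff none of its vertices is redundant: if some proper subset $T$ of an fvs $S'$ were an fvs, then for any $v\in S'\setminus T$ the superset $S'\setminus\{v\}\supseteq T$ of $T$ would be an fvs too, making $v$ redundant. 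Hence certifying minimality reduces to exhibiting a private cycle for each vertex of the set.

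First I would treat the case where $S$ is an fvs of $G$, and argue that it is then \emph{already minimal} in $G$, so the construction returns $S$ with no change in size. The point is that each private cycle of a vertex $v\in S$ (guaranteed by minimality of $S$ in $G-u$) lives in $G[((V\setminus\{u\})\setminus S)\cup\{v\}]$, and since $(V\setminus\{u\})\setminus S\subseteq V\setminus S$ this same cycle persists in $G[(V\setminus S)\cup\{v\}]$, witnessing that $v$ is not redundant in $G$. In the complementary case, where $S$ is not an fvs of $G$, I would note that $G[V\setminus S]$ contains a cycle while $G[(V\setminus\{u\})\setminus S]$ does not, so every surviving cycle passes through $u$; the construction then returns $S\cup\{u\}$, of size $|S|+1$. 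This is an fvs of $G$ because $G[V\setminus(S\cup\{u\})]=G[(V\setminus\{u\})\setminus S]$ is a forest, and it is minimal because for $v\in S$ the relevant subgraph $G[(V\setminus(S\cup\{u\}))\cup\{v\}]$ is exactly $G[((V\setminus\{u\})\setminus S)\cup\{v\}]$, so the private cycles carry over, while $u$ itself has a private cycle, namely any cycle of $G[V\setminus S]$.

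There is no genuine obstacle in this argument; the only thing to be careful about is the direction of the size comparison. The tempting ``greedy'' move --- take $S$, add $u$ if necessary to obtain an fvs of $G$, then strip away redundant vertices to reach minimality --- can only shrink the set, which is the wrong thing to do. The case analysis above shows that stripping is never needed: $S$ is either minimal already, or becomes minimal after adding the single vertex $u$, and in the latter case the size even goes up by one.
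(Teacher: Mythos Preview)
Your proof is correct and follows essentially the same approach as the paper: the paper adds $u$ to $S$ and then removes redundant vertices, observing that only $u$ itself can be redundant (since each $v\in S$ keeps its private cycle from $G-u$); your case analysis simply unfolds this into the two possible outcomes of that stripping step. Incidentally, the ``tempting greedy move'' you caution against is exactly what the paper does, and it works for precisely the reason you identify---no vertex of $S$ can ever be stripped---so your worry there is unfounded.
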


\begin{proof} Let $S$ be a minimal fvs of $G-u$. We
observe that $S\cup\{u\}$ is an fvs of $G$. If $S\cup\{u\}$ is minimal, we are
done. If not, we delete vertices from it until it becomes minimal. We now note
that the only vertex which may be deleted in this process is $u$, since all
vertices of $S$ have a private cycle in $G-u$ (that is, a cycle not intersected
by any other vertex of $S$). Hence, the resulting set is a superset of $S$.
\end{proof}

\begin{lemma}\label{lem:ur2} Let $G=(V,E)$ be a graph, $u,v\in V$ with
$N(u)\cap N(v)=\emptyset$ and $uv\in E$.  Then $\mmfvs(G) \ge \mmfvs(G/uv) $.
Furthermore, given any minimal feedback vertex set $S$ of $G/uv$, it is
possible to construct in polynomial time a minimal feedback vertex set of $G$
of the same or larger size.  \end{lemma}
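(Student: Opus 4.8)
The plan is to mimic the strategy of Lemma~\ref{lem:ur1}, but with a twist accounting for the fact that contracting an edge does not simply delete information about cycles the way deleting a vertex does. Let $w$ be the vertex of $G/uv$ obtained by contracting $uv$, and let $S$ be a minimal fvs of $G/uv$. We distinguish two cases according to whether $w\in S$. If $w\notin S$, then $S$ is already a set of vertices of $G$, and since $G[(V\setminus S)]$ maps onto $(G/uv)[V(G/uv)\setminus S]$ in a way that can only create cycles via the identification of $u$ and $v$ — but $u$ and $v$ are adjacent and have no common neighbor, so any cycle through both in $G$ would project to a closed walk, hence a cycle, through $w$ in $G/uv$ — we get that $S$ is already an fvs of $G$. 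Then, exactly as in Lemma~\ref{lem:ur1}, if it is not minimal we delete redundant vertices until it is; but every vertex of $S$ had a private cycle in $G/uv$, and such a private cycle lifts to a cycle in $G$ avoiding $S\setminus\{x\}$ (replacing $w$, if it appears, by the edge $uv$), so no vertex of $S$ is redundant in $G$ and $S$ itself is already minimal, of the same size.

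The interesting case is $w\in S$. Here I would consider $S' := (S\setminus\{w\})\cup\{u,v\}$ as a candidate fvs of $G$: removing both $u$ and $v$ from $G$ clearly leaves the forest $(G/uv)[V(G/uv)\setminus S]$, so $S'$ is an fvs of $G$ of size $|S|+1$. Now I make it minimal by deleting redundant vertices. As before, no vertex of $S\setminus\{w\}$ can become redundant: its private cycle in $G/uv$ lifts to a cycle in $G$ through the same vertices (with $w$ replaced by a path through $u,v$ if needed) that is still internally disjoint from the rest of $S'$. So the only vertices that the cleanup step might remove are $u$ and $v$. If it removes at most one of them, the resulting minimal fvs has size at least $|S|$ and we are done. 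The main obstacle — and the point requiring the most care — is ruling out that \emph{both} $u$ and $v$ get deleted, which would drop us to size $|S|-1$.

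To handle this, I would argue that $w\in S$ being non-redundant in $G/uv$ means $w$ has a private cycle $C$ in $G/uv$; lift $C$ to $G$. Since $w$ was on $C$ with two distinct neighbors on $C$, and $N(u)\cap N(v)=\emptyset$, the lifted structure either is a cycle through $u$ alone, through $v$ alone, or through the edge $uv$ (using both); in the last case, after contracting we could instead keep $u$ (say) as the ``representative'', noting that the two $C$-neighbors of $w$ in $G/uv$ split into neighbors of $u$ and neighbors of $v$, and at least one of the two new vertices still lies on a cycle avoiding all of $S\setminus\{w\}$. Concretely, I would \emph{not} symmetrically add both $u$ and $v$; instead I add both but then observe that when the greedy minimization process considers $u$ and then $v$, it can delete at most one of them, because after deleting the first (say $u$) the lifted private cycle of $w$ becomes a private cycle of the other (namely $v$) — this cycle uses $v$ but not $u$ (the sub-path of $C$ through $w$, rerouted entirely on the $v$-side, which is possible since $u,v$ have no common neighbor so the two $C$-neighbors of $w$ are, up to relabeling, reachable appropriately; a short case analysis on which side each $C$-neighbor lands settles it). Hence at least one of $u,v$ survives as non-redundant, the final minimal fvs of $G$ has size $\ge |S|$, and the construction is clearly polynomial, completing the proof.
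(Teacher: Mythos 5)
Your proof is correct and follows essentially the same route as the paper: the same split on whether $w\in S$, the same candidate set $S'=(S\setminus\{w\})\cup\{u,v\}$, and the same conclusion that at most one of $u,v$ can be pruned. The only difference is in how you justify that last claim: the paper observes cleanly that $G[(V\setminus S')\cup\{u,v\}]$ contracts via $uv$ to $G'[(V'\setminus S)\cup\{w\}]$, which has a cycle by minimality of $S$, and since contractions with $N(u)\cap N(v)=\emptyset$ preserve acyclicity the former graph must also have a cycle and hence $S'\setminus\{u,v\}$ is not an fvs; your version lifts $w$'s private cycle explicitly and case-analyzes on which of $u,v$ its two $G'$-neighbors attach to, which works but is somewhat hand-wavy at the "reroute on the $v$-side" step.
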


\begin{proof} Before we prove the Lemma we note that the contraction operation,
under the condition that $N(u)\cap N(v)=\emptyset$, preserves acyclicity in a
strong sense: $G$ is acyclic if and only if $G/uv$ is acyclic. Indeed, if we
contract an edge that is part of a cycle, this cycle must have length at least
$4$, and will therefore give a cycle in $G/uv$. Of course, contractions never
create cycles in acyclic graphs.

Let $G':=G/uv$, $w$ be the vertex of $G'$ which has replaced $u,v$, $V'=V(G')$,
and $S$ be a minimal fvs of $G'$. We have two cases: $w\in S$ or $w\not\in S$.

In case $w\in S$, we start with the set $S'=(S\setminus\{w\})\cup\{u,v\}$. It
is not hard to see that $S'$ is an fvs of $G$. Furthermore, no vertex of
$S'\setminus\{u,v\}$ is redundant: for all $z\in S\setminus\{w\}$, there is a
cycle in $G'[(V'\setminus S) \cup \{z\}]$, therefore there is also a cycle in
$G[(V\setminus S')\cup \{z\}]$. Furthermore, we claim that $S'\setminus\{u,v\}$
is not a valid fvs. Indeed, there must be a cycle contained (due to minimality)
in $G_1=G'[(V'\setminus S) \cup\{w\}]$. Therefore, if there is no cycle in
$G_2=G[(V\setminus S')\cup\{u,v\}]$, we get a contradiction, as $G_1$ can be
obtained by $G_2$ by contracting the edge $uv$ and contracting edges preserves
acyclicity. We conclude that even if $S'$ is not minimal, if we remove vertices
until it becomes minimal, we will remove at most one vertex, so the size of the
fvs obtained is at least $|S|$.

In case $w\not\in S$, we will return the same set $S$. Let $F=V\setminus S,
F'=V'\setminus S$.  By definition, $G'[F']$ is acyclic. To see that $G[F]$ is
also a forest, we note that $G'[F']$ is obtained from $G[F]$ by contracting
$uv$, and as we noted in the beginning, the contractions we use strongly
preserve acyclicity.  To see that $S$ is minimal, take $z\in S$ and consider
the graphs $G_1=G[(V\setminus S)\cup \{z\}]$ and $G_2=G'[(V'\setminus
S)\cup\{z\}]$.  We see that $G_2$ can be obtained from $G_1$ by contracting
$uv$. But $G_2$ must have a cycle, by the minimality of $S$, so $G_1$ also has
a cycle. Thus, $S'$ is minimal in $G$.  \end{proof}

\section{Polynomial Time Approximation Algorithm}

In this section we present a polynomial-time algorithm which guarantees an
approximation ratio of $n^{2/3}$. As we show in Theorem \ref{thm:Inapprox},
this ratio is the best that can be hoped for in polynomial time. Later (Theorem
\ref{thm:subexp}) we show how to generalize the ideas presented here to obtain
an algorithm that achieves a trade-off between the approximation ratio and the
(sub-exponential) running time, and show that this trade-off is essentially
optimal.

On a high level, our algorithm proceeds as follows: first we identify some easy
cases in which applying Lemma \ref{lem:ur1} or Lemma \ref{lem:ur2} is
\emph{safe}, that is, the value of the optimal is guaranteed to stay constant,
namely deleting vertices of degree at most $1$, and contracting edges between
vertices of degree $2$. After we apply these reduction rules exhaustively, we
compute a minimal fvs $S$ in an arbitrary way. If $S$ is large enough (larger
than $n^{1/3}$), we simply return this set.

If not, we apply some counting arguments to show that a vertex $u\in S$ with
high degree ($\ge n^{2/3}$) must exist. We then have two cases: either we are
able to construct a large minimal fvs just by looking at the neighborhood of
$u$ in the forest (and ignoring $S\setminus\{u\}$), or $u$ must share many
neighbors with another vertex $v\in S$, in which case we construct a large
minimal fvs in the common neighborhood of $u,v$.

Because our algorithm is constructive (and runs in polynomial time), we find it
interesting to remark an interpretation from the point of view of extremal
combinatorics, given in Corollary \ref{cor:extr}.

\subsection{Basic Reduction Rules and Combinatorial Tools}

We begin by showing two \emph{safe} versions of Lemmas \ref{lem:ur1},
\ref{lem:ur2}.

\begin{lemma}\label{lem:sr1} Let $G,u$ be as in Lemma \ref{lem:ur1} with
$d(u)\le 1$. Then $\mmfvs(G-u)=\mmfvs(G)$. \end{lemma}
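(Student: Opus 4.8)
The plan is to prove Lemma~\ref{lem:sr1}: if $d(u)\le 1$ then $\mmfvs(G-u)=\mmfvs(G)$. One inequality, $\mmfvs(G)\ge\mmfvs(G-u)$, is exactly Lemma~\ref{lem:ur1}, so the only work is the reverse inequality $\mmfvs(G)\le\mmfvs(G-u)$. For this I would start from an optimal minimal fvs $S$ of $G$ with $|S|=\mmfvs(G)$ and show how to extract a minimal fvs of $G-u$ of size at least $|S|$.

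The key observation is that a vertex of degree at most $1$ lies on no cycle, hence it is never \emph{needed} in any fvs and, more to the point, it can never have a private cycle in a minimal fvs. So first I would argue $u\notin S$: if $u\in S$, then since $d(u)\le 1$, $u$ has no private cycle (there is no cycle through $u$ in $G[(V\setminus S)\cup\{u\}]$, because $u$ has degree $\le 1$ there as well), contradicting minimality of $S$. Therefore $u\notin S$, i.e.\ $u\in F:=V\setminus S$. Now consider $S$ as a subset of $V(G-u)$. Clearly $S$ is still an fvs of $G-u$, since $(G-u)[V(G-u)\setminus S]$ is an induced subgraph of the forest $G[F]$. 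It remains to check minimality: for any $z\in S$, $z$ has a private cycle $C$ in $G[(V\setminus S)\cup\{z\}]$; since $u$ has degree $\le 1$ it cannot lie on $C$, so $C$ survives in $(G-u)[(V(G-u)\setminus S)\cup\{z\}]$, witnessing that $z$ is not redundant in $G-u$. Hence $S$ is a minimal fvs of $G-u$ of size $\mmfvs(G)$, giving $\mmfvs(G-u)\ge\mmfvs(G)$.

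There is no real obstacle here; the only point requiring a moment of care is the claim that $u\notin S$, which uses the characterization (stated in the Preliminaries) that every vertex of a minimal fvs has a private cycle together with the trivial fact that a degree-$\le 1$ vertex lies on no cycle of $G$ at all. Everything else is a routine verification that deleting a vertex that lies on no cycle neither destroys the fvs property nor destroys any private cycle. Combining this with Lemma~\ref{lem:ur1} yields $\mmfvs(G-u)=\mmfvs(G)$, and since the proof is constructive (the set $S$ itself transfers, after possibly locating it via an optimal solution of $G-u$ and applying Lemma~\ref{lem:ur1} in the other direction), this also justifies using the rule safely as a preprocessing step.
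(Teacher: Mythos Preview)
Your proof is correct and follows essentially the same approach as the paper: use Lemma~\ref{lem:ur1} for one inequality, then for the other take a minimal fvs $S$ of $G$, observe $u\notin S$ since a degree-$\le 1$ vertex has no private cycle, and check that $S$ remains a minimal fvs of $G-u$ because no cycle of $G$ passes through $u$. The paper's version is more terse but the argument is identical.
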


\begin{proof} We only need to show that
$\mmfvs(G)\le\mmfvs(G-u)$ (the other direction is given by Lemma
\ref{lem:ur1}). Let $S$ be a minimal fvs of $G$.  Then, $S$ is an fvs of $G-u$.
Furthermore, $u\not\in S$, as $S$ is minimal in $G$. To see that $S$ is also
minimal in $G-u$, note that any cycle of $G$ also exists in $G-u$ (as no cycle
contains $u$).  \end{proof}

\begin{lemma}\label{lem:sr2} Let $G,u,v$ be as in Lemma \ref{lem:ur2} with
$d(u)=d(v)=2$. Then $\mmfvs(G/uv)=\mmfvs(G)$. \end{lemma}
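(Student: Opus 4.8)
The plan is to prove the missing inequality $\mmfvs(G)\le\mmfvs(G/uv)$; the reverse is already supplied by Lemma~\ref{lem:ur2}. Let $S$ be a minimal fvs of $G$ realizing $\mmfvs(G)$, and let $w$ be the vertex of $G'=G/uv$ that replaces $u$ and $v$. I would split into cases according to how $S$ meets $\{u,v\}$. Since $d(u)=d(v)=2$ and $uv\in E$, each of $u,v$ has exactly one neighbor outside $\{u,v\}$; call them $a$ (the other neighbor of $u$) and $b$ (the other neighbor of $v$), with $a\neq b$ because $N(u)\cap N(v)=\emptyset$.

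First I would argue that $S$ cannot contain both $u$ and $v$: if it did, then since $u$ has degree $2$, any private cycle of $u$ in $G[(V\setminus S)\cup\{u\}]$ would have to use both neighbors of $u$, i.e. both $v$ and $a$; but $v\in S$, contradiction. So $|S\cap\{u,v\}|\le 1$. If $S\cap\{u,v\}=\emptyset$, I would take $S'=S$ viewed as a subset of $V'$: it is an fvs of $G'$ because $G'[V'\setminus S]$ is obtained from the forest $G[V\setminus S]$ by contracting the edge $uv$, and contraction of an edge with $N(u)\cap N(v)=\emptyset$ strongly preserves acyclicity (as observed in the proof of Lemma~\ref{lem:ur2}); and it is minimal because for any $z\in S$ the graph $G'[(V'\setminus S)\cup\{z\}]$ is obtained from $G[(V\setminus S)\cup\{z\}]$ by the same contraction, which preserves the private cycle of $z$. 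Hence $\mmfvs(G/uv)\ge|S'|=|S|=\mmfvs(G)$. If $S\cap\{u,v\}=\{u\}$ (the case $\{v\}$ is symmetric), I would set $S'=(S\setminus\{u\})\cup\{w\}$; one checks as above that $S'$ is an fvs of $G'$ of the same size. For minimality, vertices $z\in S\setminus\{u\}$ keep a private cycle by the contraction argument, and for $w$ the private cycle of $u$ in $G$ — which uses $u$, $v$ (its only other option besides $a$, since $v\notin S$) and passes through $a$ — becomes a cycle through $w$ in $G'$. So again $\mmfvs(G/uv)\ge|S|$.

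The one subtlety I expect to need care is the very first observation, that a degree-$2$ vertex $u$ in $S$ forces its private cycle to pass through its unique external neighbor $a$, and hence $v\notin S$ — this is what rules out the awkward configuration and makes the size bookkeeping exact (losing no vertex). Everything else is a routine application of the ``contraction strongly preserves acyclicity'' fact already established in the excerpt, applied in the direction opposite to Lemma~\ref{lem:ur2}. Combining the three cases gives $\mmfvs(G)\le\mmfvs(G/uv)$, and with Lemma~\ref{lem:ur2} we conclude $\mmfvs(G/uv)=\mmfvs(G)$.
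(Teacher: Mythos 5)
Your proof is correct and essentially identical to the paper's: you rule out $u,v\in S$ simultaneously via the degree-$2$ observation, apply the ``contraction strongly preserves acyclicity'' fact for both the fvs property and minimality when $S\cap\{u,v\}=\emptyset$, and replace $u$ by $w$ in the remaining case. The only small slip is in that last case, where you justify minimality for $z\in S\setminus\{u\}$ ``by the contraction argument'': in fact $G'[(V'\setminus S')\cup\{z\}]$ is obtained from $G[(V\setminus S)\cup\{z\}]$ not by contracting $uv$ (impossible, as $u\in S$ is absent) but by deleting $v$, which has degree at most $1$ there (its only possible neighbor is $b$, since $u\in S$ and $z\neq u$), so the private cycle of $z$ survives --- which is exactly the observation the paper makes.
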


\begin{proof} Let $G'=G/uv$, $w$ be the vertex that replaced $u,v$ in $G'$, and
$V'=V(G')$.

We only need to show that $\mmfvs(G)\le\mmfvs(G')$, as the other direction is
given by Lemma \ref{lem:ur2}. Let $S$ be a minimal fvs of $G$. We consider two
cases:

If $u,v\not\in S$, then we claim that $S$ is also a minimal fvs of $G'$.
Indeed, $G'[V'\setminus S]$ is obtained from $G[V\setminus S]$ by contracting
$uv$, so both are acyclic. Furthermore, for all $z\in S$, $G'[(V'\setminus S)
\cup\{z\}]$ is obtained from $G[(V\setminus S)\cup\{z\}]$ by contracting $uv$,
therefore both have a cycle, hence no vertex of $S$ is redundant in $G'$.

If $\{u,v\}\cap S\neq \emptyset$, we claim that exactly one of $u,v$ is in $S$.
Indeed, if $u,v\in S$, then $G[(V\setminus S)\cup \{u\}]$ does not contain a
cycle going through $u$, as $u$ has degree $1$ in this graph. Without loss of
generality, let $u\in S$, $v\not\in S$. We set $S':=(S\setminus\{u\})\cup\{w\}$
and claim that $S'$ is a minimal fvs of $G'$. It is not hard to see that $S'$
is an fvs of $G'$, since it corresponds to deleting $S\cup\{v\}$ from $G$. To
see that it is minimal, for all $z\in S'\setminus\{w\}$ we observe that
$G'[(V'\setminus S')\cup\{z\}]$ obtained from $G'[(V\setminus S)\cup\{z\}]$ by
deleting $v$, which has degree $1$. Therefore, this deletion strongly preserves
acyclicity. Finally, to see that $w$ is not redundant for $S'$ we observe that
$G[(V\setminus S)\cup\{u\}]$ has a cycle, and this cycle must be present in
$G'[(V'\setminus S')\cup\{w\}]$, which is obtained from the former graph by
contracting $uv$.  \end{proof}

\begin{definition} For a graph $G=(V,E)$ we say that $G$ is reduced if it is
not possible to apply Lemma \ref{lem:sr1} or Lemma \ref{lem:sr2} to $G$.
\end{definition}

We now present a counting argument which will useful in our algorithm and
states, roughly, that if in a reduced graph we find a (not necessarily minimal)
fvs, that fvs must have many neighbors in the corresponding forest.

\begin{lemma}\label{lem:manyneighbors} Let $G=(V,E)$ be a reduced graph and
$S\subseteq V$ a feedback vertex set of $G$. Let $F=V\setminus S$. Then,
$|N(S)\cap F|\ge \frac{|F|}{4}$.  \end{lemma}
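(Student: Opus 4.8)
The plan is to put $F_{\mathrm{out}} := N(S)\cap F$ and $F_{\mathrm{in}} := F\setminus F_{\mathrm{out}}$, so that the claim becomes $|F_{\mathrm{in}}|\le 3|F_{\mathrm{out}}|$. The two structural facts I will use both come from $G$ being reduced: every vertex of $F_{\mathrm{in}}$ has all its neighbours in $F$ and, since $G$ has minimum degree at least $2$, therefore already has degree at least $2$ in the forest $G[F]$; and the degree‑$2$ vertices of $G$ form an independent set. Accordingly I split $F_{\mathrm{in}} = A \sqcup B$, where $A$ is the set of vertices of degree exactly $2$ in $G[F]$ (equivalently, in $G$), which is independent, and $B$ is the set of vertices of degree at least $3$ in $G[F]$. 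After disposing of the trivial cases $F=\emptyset$ and $F_{\mathrm{in}}=\emptyset$ (in the latter, $|F_{\mathrm{out}}|=|F|$), it remains to bound $|A|$ and $|B|$ separately in terms of $q:=|F_{\mathrm{out}}|$.

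First I would bound $|B|$. Let $m\ge 1$ be the number of connected components of the forest $G[F_{\mathrm{in}}]$ and $c\ge 1$ the number of components of $G[F]$. For a component $C$ of $G[F_{\mathrm{in}}]$, every edge of $G[F]$ with exactly one endpoint in $C$ must lead to $F_{\mathrm{out}}$, since an edge to another $F_{\mathrm{in}}$-component would contradict maximality of $C$; as $C$ is a tree on $|C|$ vertices, the number of such edges is $\sum_{v\in C}d_{G[F]}(v)-2(|C|-1)\ge 2|A\cap C|+3|B\cap C|-2(|C|-1)=|B\cap C|+2$. Summing over components, the number of edges between $F_{\mathrm{in}}$ and $F_{\mathrm{out}}$ is at least $|B|+2m$. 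On the other hand, comparing edge counts of the forests $G[F]$ and $G[F_{\mathrm{in}}]$ (which have $|F|-c$ and $|F_{\mathrm{in}}|-m$ edges respectively), the number of edges between $F_{\mathrm{in}}$ and $F_{\mathrm{out}}$ is $(|F|-c)-(|F_{\mathrm{in}}|-m)-|E(G[F_{\mathrm{out}}])|\le |F_{\mathrm{out}}|+m-c\le q+m-1$. Combining, $|B|+2m\le q+m-1$, hence $|B|+m\le q-1$ and in particular $|B|\le q-2$.

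Next I would bound $|A|$. Since $A$ is independent, $2|A|=\sum_{v\in A}d_{G[F]}(v)$ is exactly the number of edges from $A$ to $B$ plus the number of edges from $A$ to $F_{\mathrm{out}}$. The former is at most the total number of edges of the forest $G[F_{\mathrm{in}}]$, namely $|A|+|B|-m$; the latter is at most the number of $F_{\mathrm{in}}$–$F_{\mathrm{out}}$ edges, already bounded above by $q+m-1$. Hence $2|A|\le(|A|+|B|-m)+(q+m-1)$, i.e.\ $|A|\le |B|+q-1$. Putting the pieces together, $|F|=|A|+|B|+q\le(2|B|+q-1)+q\le 4q-5<4q$, so $q>|F|/4$, which proves the lemma.

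The step I expect to be the crux is the upper bound on the number of edges between $F_{\mathrm{in}}$ and $F_{\mathrm{out}}$: the essential point is that this quantity is controlled by $|F_{\mathrm{out}}|$ (plus the number of components of $G[F_{\mathrm{in}}]$) and \emph{not} by $|F_{\mathrm{in}}|$, and this is precisely where the acyclicity of $G[F]$ is used in a non-trivial way — a purely local degree count that ignores this global structure is too weak and does not even prevent $F_{\mathrm{in}}$ from being arbitrarily larger than $F_{\mathrm{out}}$. The remaining ingredients (separating the degree‑$2$ vertices $A$ from the higher-degree vertices $B$, the per-component degree count, and the observation that low-degree vertices of $G[F]$ are automatically pushed into $F_{\mathrm{out}}$ by reducedness) are comparatively routine once this is in place.
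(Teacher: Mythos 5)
Your proof is correct, and it takes a genuinely different route from the paper's. The paper's argument is a direct degree count: it classifies the vertices of $F$ into leaves, degree-$\ge 3$ vertices, degree-$2$ vertices with an $S$-neighbour, and degree-$2$ vertices without one, then combines the forest inequality $n_3\le n_1$ with the independence of the last class to get $|F|\le 4|N(S)\cap F|$ in two lines. You instead partition first by $S$-adjacency ($F_{\mathrm{out}}$ vs.\ $F_{\mathrm{in}}$), split $F_{\mathrm{in}}$ by degree, and carry out a per-component edge count together with Euler's formula for forests to bound the number of $F_{\mathrm{in}}$--$F_{\mathrm{out}}$ edges by $q+m-1$. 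That global count is exactly what replaces the paper's local inequality $n_3\le n_1$, and it buys you a marginally sharper estimate, $|F|\le 4q-5$, which is in fact attained with equality by the tight example of Remark~\ref{rem:t1}. One small imprecision you should patch: the blanket statement ``the degree-$2$ vertices of $G$ form an independent set'' is not literally forced by reducedness, since Lemma~\ref{lem:sr2} requires $N(u)\cap N(v)=\emptyset$, and two adjacent degree-$2$ vertices with a common neighbour (a triangle) survive the reduction rules. What saves you is that the vertices of $A$ lie in $F_{\mathrm{in}}$ inside the forest $G[F]$: they have no neighbours in $S$, and two adjacent forest vertices cannot share a neighbour in $F$ without creating a triangle, so the disjointness hypothesis of Lemma~\ref{lem:sr2} is automatically met there. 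State that locally for $A$ rather than globally for $G$ and the argument is watertight.
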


\begin{proof} Let $n_1$ be the number of leaves
of $F$, $n_3$ the number of vertices of $F$ with at least three neighbors in
$F$, $n_{2a}$ the number of vertices of $F$ with two neighbors in $F$ and at
least one neighbor in $S$, and $n_{2b}$ the number of remaining vertices of
$F$. We have $n_1+n_{2a}+n_{2b}+n_3=|F|$.  Furthermore,  $n_3\le n_1$ because
the average degree of any forest is less than $2$.

We observe that all leaves of the tree have a neighbor in $S$ (otherwise we
would have applied Lemma \ref{lem:sr1}). This gives $|N(S)\cap F|\ge
n_1+n_{2a}$. 

Furthermore, none of the $n_{2b}$ vertices which have degree two in the tree
and no neighbors in $S$ can be connected to each other, since then Lemma
\ref{lem:sr2} would apply. Therefore, $n_{2b}\le n_1+n_{2a}+n_3$. Indeed, if
$n_{2b}>n_1+n_{2a}+n_3$, then $n_{2b}>|F|/2$ and, since these $n_{2b}$ vertices
form an independent set, we would have $|E(F)|\ge 2n_{2b} >|F|$, contradicting
the assumption that $F$ is a forest.

Putting things together we get $|F|=n_1+n_{2a}+n_{2b}+n_3 \le 2n_1+2n_{2a}
+2n_3 \le 4n_1+2n_{2a} \le 4|N(S)\cap F|$.  \end{proof}

We note that Lemma \ref{lem:manyneighbors} immediately gives an approximation
algorithm with ratio $O(\Delta)$.

\begin{lemma}\label{lem:delta} In a reduced graph $G$ with $n$ vertices and
maximum degree $\Delta$, every feedback vertex set has size at least
$\frac{n}{5\Delta}$.  \end{lemma}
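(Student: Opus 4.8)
The plan is to combine Lemma \ref{lem:manyneighbors} with a trivial degree bound. Let $S$ be any feedback vertex set of the reduced graph $G$ and write $F = V \setminus S$, so that $|F| = n - |S|$. I would first invoke Lemma \ref{lem:manyneighbors} to get $|N(S) \cap F| \ge \frac{|F|}{4} = \frac{n-|S|}{4}$.

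Next I would bound the same quantity from above using the maximum degree: every vertex of $S$ has at most $\Delta$ neighbors, so $|N(S) \cap F| \le \sum_{u \in S} |N(u)| \le \Delta |S|$. Chaining the two inequalities yields $\Delta |S| \ge \frac{n - |S|}{4}$, i.e.\ $(4\Delta + 1)|S| \ge n$, hence $|S| \ge \frac{n}{4\Delta + 1}$.

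Finally I would simplify: since $G$ is reduced it contains no vertex of degree at most $1$ (Lemma \ref{lem:sr1}), so $\Delta \ge 2 \ge 1$, and therefore $4\Delta + 1 \le 5\Delta$, giving $|S| \ge \frac{n}{5\Delta}$, as claimed. There is no real obstacle here — the only subtlety is making sure the constant works out, which it does as soon as $\Delta \ge 1$, a condition guaranteed by reducedness (and vacuously fine if $n = 0$).
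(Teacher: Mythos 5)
Your proof is correct and is essentially the same argument as the paper's: both combine Lemma~\ref{lem:manyneighbors} with the bound $|N(S)\cap F|\le \Delta|S|$; the paper phrases it as a proof by contradiction while you chain the inequalities directly (and incidentally obtain the slightly sharper $|S|\ge n/(4\Delta+1)$ before relaxing to $n/(5\Delta)$).
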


\begin{proof} Let $S$ be a feedback vertex set of $G$
and $F$ the corresponding forest. If $|S|<\frac{n}{5\Delta}$ then $|N(S)\cap
F|< \frac{n}{5}$ so by Lemma \ref{lem:manyneighbors} we have
$|F|<\frac{4n}{5}$.  But then $|V|=|S|+|F|<n$, which is a contradiction.
\end{proof}

\begin{remark}\label{rem:t1}Lemma \ref{lem:manyneighbors} is tight.\end{remark}

\begin{proof} Take two copies of a rooted binary tree with
$n$ leaves and connect their roots. The resulting tree has $2n$ leaves and
$2n-2$ vertices of degree $3$.  Subdivide every edge of this tree. Add two
vertices $u,v$ connected to every leaf. In the resulting graph $S=\{u,v\}$ is
an fvs.  The corresponding forest has $8n-5$ vertices, of which $2n$ are
connected to $S$.  The graph is reduced. \end{proof}

\subsection{Polynomial Time Approximation and Extremal Results}

We begin with a final intermediate lemma that allows us to construct a large
minimal fvs in any reduced graph that is a forest plus one vertex. 

\begin{lemma}\label{lem:onevertex} Let $G=(V,E)$ be a reduced graph and $u\in
V$ such that $G-u$ is acyclic. Then it is possible to construct in polynomial
time a minimal feedback vertex set $S$ of $G$ with $|S|\ge d(u)/2$. \end{lemma}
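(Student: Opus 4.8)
The plan is to build the feedback vertex set $S$ greedily from within the forest $F = V \setminus \{u\}$, using the fact that $u$ has $d(u)$ neighbors in $F$. Since $G - u$ is acyclic, every cycle of $G$ must pass through $u$; concretely, a cycle corresponds to a path in $F$ between two distinct neighbors of $u$. So the task reduces to the following: pick a large set $S \subseteq F$ such that $F \setminus S$ contains no path between two neighbors of $u$ (this makes $S$ an fvs of $G$), while ensuring each chosen vertex retains a private cycle through $u$ (minimality). I would first reduce to the case where $F$ is a single tree $T$ and all of $N(u)$ lies in $T$ — the general case follows by handling each tree of the forest independently and summing, noting that a tree containing $k$ neighbors of $u$ should contribute about $k/2$ vertices to $S$ so that the totals add up to $d(u)/2$.

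Within a single tree $T$ rooted arbitrarily, let $t = |N(u) \cap V(T)|$. The key structural observation is: contract/suppress everything irrelevant and look at the \emph{Steiner-tree-like} structure spanned by $N(u) \cap V(T)$ together with their pairwise tree-paths; alternatively, process $T$ bottom-up. I would root $T$ and walk up from the leaves, maintaining for each subtree whether it currently "contains an unseparated neighbor of $u$." Whenever a vertex $w$ of $T$ has, among its children's subtrees, two that each still carry an unseparated neighbor of $u$ — or one such subtree plus $w$ itself being a neighbor of $u$ — then the tree-path between those two neighbors runs through $w$, so I add $w$ to $S$: this kills that cycle, and $w$ manifestly has a private cycle (through $u$ via those two neighbors) which no other vertex of $S$ intersects at the moment of selection. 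After adding $w$, mark its subtree as "cleared." Each time I add one vertex to $S$, I "consume" at least two neighbors of $u$ (they get separated from the rest); a charging argument then gives $|S \cap V(T)| \ge (t-1)/2$ roughly, and summing over all trees and being slightly careful with the off-by-one terms yields $|S| \ge d(u)/2$. Finally, after constructing this $S$, I extend/prune it to a minimal fvs of $G$: since deleting redundant vertices only shrinks $S$, I instead argue directly that the $S$ produced is already minimal, because at selection time each $w \in S$ had a private cycle through $u$, and subsequent selections are made in disjoint "cleared" regions so they do not destroy earlier private cycles.

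The technical heart — and the step I expect to be the main obstacle — is making the bottom-up charging airtight: I need each added vertex to be charged to two distinct neighbors of $u$ that are never reused, \emph{and} I need the private cycles of the selected vertices to be genuinely private (pairwise vertex-disjoint away from $u$), so that no pruning is needed and minimality is immediate. The subtlety is that when $w$ is added because two child-subtrees each carry a neighbor of $u$, the private cycle of $w$ uses specific neighbors $a, b \in N(u)$ and the tree-paths from $a,b$ up to $w$; I must ensure these paths avoid all previously-selected vertices, which holds because previously-selected vertices lie in already-cleared subtrees strictly below, but the bookkeeping on "which neighbor of $u$ survives upward past $w$" (exactly one representative continues up, the other is consumed) needs to be stated precisely so the count is exactly $\lceil$ neighbors-in-this-tree $-1\rceil/2$ rounded the right way. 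Once that invariant is nailed down, the bound $|S| \ge d(u)/2$ and polynomial running time (a single DFS) both follow directly.
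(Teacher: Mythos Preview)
Your bottom-up greedy is genuinely different from the paper's route, but it has a real gap: the charging runs in the wrong direction. Saying ``each time I add one vertex to $S$, I consume at least two neighbors of $u$'' yields $2|S|\le t$, an \emph{upper} bound on $|S|$, not the lower bound $|S|\ge (t-1)/2$ you claim; for a lower bound you would need each selection to consume \emph{at most} two neighbors. And one selection can in fact clear arbitrarily many. Take $G$ on $\{u,c,l_1,\dots,l_k\}$ with edges $ul_i$ and $cl_i$ for all $i$, where $k\ge 3$: this graph is reduced, $G-u$ is a star, and $d(u)=k$. Rooted anywhere, your procedure sees at $c$ that all $k$ children carry unseparated neighbors, adds $c$, clears the whole subtree, and halts with the (valid, minimal) fvs $\{c\}$ of size $1$, whereas the lemma demands size at least $k/2$. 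The tension you flagged between ``mark its subtree as cleared'' and ``exactly one representative continues up'' is precisely this bug: once $w$ enters $S$, everything below $w$ is cut off, so no representative can continue upward past it, and a high-degree branching vertex wipes out many neighbors of $u$ for the price of one.

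The paper sidesteps branching entirely by a contraction-then-bipartition argument. It repeatedly contracts any forest edge $vw$ with $\{v,w\}\not\subseteq N(u)$ (this is a legal use of Lemma~\ref{lem:ur2}, since such $v,w$ share no common neighbor, and it does not change $d(u)$), until every forest vertex is a neighbor of $u$ and hence $|F|=d(u)$. Then it simply returns the larger side $R$ of a bipartition of the resulting forest: $|R|\ge d(u)/2$; the complement $L\cup\{u\}$ induces a star, so $R$ is an fvs; and every $v\in R$ has some neighbor $w\in L$, giving the private triangle $u,v,w$, so $R$ is already minimal with no pruning needed. On the star example above, one contraction of $cl_1$ makes every forest vertex a neighbor of $u$, and the larger color class has $k-1\ge k/2$ vertices.
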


\begin{proof} Let $F=V\setminus\{u\}$. Since the
graph is reduced, all trees of $G[F]$ contain at least two neighbors of $u$.
Indeed, since each tree $T$ of $G[F]$ has at least two leaves, both of them
must be neighbors of $u$ (otherwise Lemma \ref{lem:sr1} applies).

As long as there exist $v,w\in F$ with $vw\in E$ and $\{v,w\}\not\subseteq
N(u)$ we contract the edge $vw$.  Note that we can apply Lemma \ref{lem:ur2} as
$v,w$ have no common neighbors ($u$ is not a common neighbor by assumption, and
they cannot have a common neighbor in the forest without forming a cycle).
Furthermore, this operation does not change $d(u)$.  Therefore, it will be
sufficient to construct a minimal fvs in the resulting graph after applying
this operation exhaustively.

Suppose now that we have applied the contraction operation described above
exhaustively.  We eventually arrive at a graph where $u$ is connected to all
vertices of $F$, as all trees of $F$ initially contain some neighbors of $u$
and, after repeated contractions, all non-neighbors of $u$ are absorbed into
its neighbors (more precisely, each contraction decreases $|F\setminus N(u)|$).
Therefore, we arrive at a graph with $d(u)=|F|$.  Furthermore, every component
of $F$ contains strictly more than one vertex.

Now, since $G[F]$ is bipartite, there is a bipartition $F=L\cup R$. Without
loss of generality $|L|\le |R|$. We return the solution $S=R$. First, $S$ does
have the promised size, as $|S|\ge |F|/2 = d(u)/2$. Second, $S$ is an fvs, as
$L$ is an independent set, so $L\cup\{u\}$ induces a star. Finally, $S$ is
minimal, because all $v\in S$ are connected to $u$, and also have at least one
neighbor  $w\in L$, with $w$ also connected to $u$. An illustration of the process is presented in Figure \ref{Fig:Exm_mmfvs_tree}.

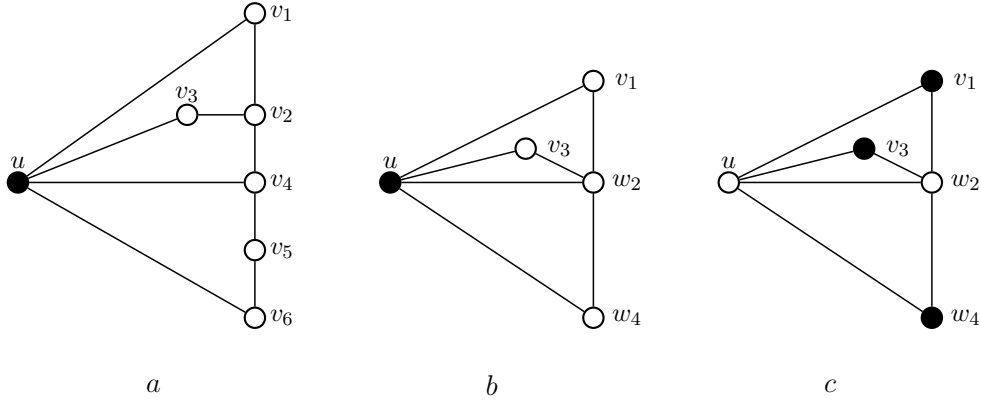
\begin{figure}
\centering
\begin{tikzpicture}[scale=0.9, transform shape]
\tikzstyle{vertex}=[circle, draw, inner sep=2pt,  minimum width=2 pt, minimum size=0.3cm, thick]

\node[vertex, fill = black] (u) at (-0.5,0) {};
\node[vertex] (v1) at (3,2.5) {};
\node[vertex] (v2) at (3,1) {};
\node[vertex] (v3) at (2,1) {};
\node[vertex] (v4) at (3,0) {};
\node[vertex] (v5) at (3,-1) {};
\node[vertex] (v6) at (3,-2) {};

\draw (u)[line width = 0.2 mm]--(v1)--(v2)--(v4)--(v5)--(v6)--(u);
\draw (v2)[line width = 0.2 mm]--(v3)--(u)--(v4);
\node () at (-0.5,0.3) {$u$};
\node () at (3.4,2.5) {$v_1$};
\node () at (3.4,1) {$v_2$};
\node () at (2,1.3) {$v_3$};
\node () at (3.4,0) {$v_4$};
\node () at (3.4,-1) {$v_5$};
\node () at (3.4,-2) {$v_6$};
\node () at (1.5,-3) {\large{$a$}};

\begin{scope}[xshift=5cm]

\node[vertex, fill = black] (u) at (0,0) {};
\node[vertex] (w1) at (3,1.5) {};
\node[vertex] (w3) at (2,0.5) {};
\node[vertex] (w2) at (3,0) {};
\node[vertex] (w4) at (3,-2) {};

\draw (u)[line width = 0.2 mm]--(w1)--(w2)--(u)--(w3)--(w2)--(w4)--(u);
\node () at (0,0.3) {$u$};
\node () at (3.5,1.5) {$v_1$};
\node () at (3.5,0) {$w_2$};
\node () at (2.5,0.5) {$v_3$};
\node () at (3.5,-2) {$w_4$};

\node () at (1.5,-3) {\large{$b$}};
\end{scope}

\begin{scope}[xshift=10cm]
\node[vertex] (u) at (0,0) {};
\node[vertex, fill = black] (w1) at (3,1.5) {};
\node[vertex, fill = black] (w3) at (2,0.5) {};
\node[vertex] (w2) at (3,0) {};
\node[vertex, fill = black] (w4) at (3,-2) {};

\draw (u)[line width = 0.2 mm]--(w1)--(w2)--(u)--(w3)--(w2)--(w4)--(u);
\node () at (0,0.3) {$u$};
\node () at (3.5,1.5) {$v_1$};
\node () at (3.5,0) {$w_2$};
\node () at (2.5,0.5) {$v_3$};
\node () at (3.5,-2) {$w_4$};

\node () at (1.5,-3) {\large{$c$}};

\end{scope}

\end{tikzpicture}
\caption{$(a)$ vertex $u$ is a minimal fvs of the given graph and has 4 neighbors in $G[F]$. $(b)$ a contracted form of $G[F]$ with 4 vertices. $(c)$ a new minimal fvs of the result graph of size 3.}\label{Fig:Exm_mmfvs_tree}
\end{figure}

\end{proof}

\begin{theorem}\label{thm:algpoly} There is a polynomial time approximation
algorithm for \textsc{Max Min FVS} with ratio $O(n^{2/3})$.
\end{theorem}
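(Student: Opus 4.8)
The plan is to combine a trivial lower bound with two structural cases. First I would preprocess: apply Lemmas~\ref{lem:sr1} and~\ref{lem:sr2} exhaustively, which runs in polynomial time, does not change $\mmfvs$, and (via Lemmas~\ref{lem:ur1},~\ref{lem:ur2}) lets me lift any minimal fvs of the reduced graph back to one of the original graph of at least the same size. So I may assume $G$ is reduced (returning $\emptyset$ if it became empty, and brute-forcing small $n$). I would then greedily compute some minimal fvs $S$ of $G$ (start from $V$, delete vertices while still an fvs) and set $F=V\setminus S$. If $|S|\ge n^{1/3}$ I simply return $S$; since $\mmfvs(G)\le n$ this has ratio $\le n^{2/3}$. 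Otherwise $|S|<n^{1/3}$, so $|F|>n-n^{1/3}$ and, by Lemma~\ref{lem:manyneighbors}, $|N(S)\cap F|\ge |F|/4$; averaging over $S$ yields some $u\in S$ with $|N(u)\cap F|\ge |N(S)\cap F|/|S|=\Omega(n^{2/3})$. Write $D=N(u)\cap F$.

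Next I would classify the connected components (trees) of $G[F]$ as \emph{good} if they contain at least two vertices of $D$ and \emph{lonely} if they contain exactly one, and let $F_u$ be the union of the good trees, $a=|D\cap F_u|$. \textbf{Case (a): $a\ge |D|/2$.} Then in the graph $G[F_u\cup\{u\}]$ the vertex $u$ has degree $a=\Omega(n^{2/3})$, $G[F_u\cup\{u\}]-u$ is acyclic, and every component of $G[F_u]$ contains at least two neighbors of $u$ by construction. The proof of Lemma~\ref{lem:onevertex} uses the ``reduced'' hypothesis only to establish exactly this last property, so the same argument (contract forest edges with at most one endpoint in $N(u)$, which leaves $d(u)$ unchanged, until $u$ is adjacent to everything, then keep the larger side of the bipartition) produces a minimal fvs of $G[F_u\cup\{u\}]$ of size $\ge a/2=\Omega(n^{2/3})$; I lift it to $G$ with Lemma~\ref{lem:ur1} and return it, obtaining ratio $O(n^{1/3})$.

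\textbf{Case (b): $a<|D|/2$.} Then at least $|D|/2=\Omega(n^{2/3})$ neighbors of $u$ lie in pairwise distinct lonely trees; call this set $R\subseteq D$ and let $T_x$ be the tree of $x\in R$. For each $x\in R$ I would pick a vertex $y_x\in T_x$ having a neighbor in $S\setminus\{u\}$: if $T_x=\{x\}$ then all $\ge 2$ neighbors of $x$ (reducedness forces minimum degree $\ge 2$) are in $S$, one being $u$, so take $y_x=x$; if $|T_x|\ge 2$ then $T_x$ has a leaf $\ell\neq x$, $\ell$ is a leaf of $F$ hence has a neighbor in $S$ by reducedness (Lemma~\ref{lem:sr1}), and since $x$ is the only vertex of $D$ in $T_x$ we get $\ell\notin N(u)$, so that neighbor is in $S\setminus\{u\}$; take $y_x=\ell$. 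The $y_x$ are distinct (distinct trees), so averaging over $S\setminus\{u\}$ gives $v\in S\setminus\{u\}$ adjacent to at least $|R|/|S|=\Omega(n^{1/3})$ of them; let $R'=\{x\in R: v\sim y_x\}$, $t=|R'|$. For each $x\in R'$ I replace $y_x$ by the vertex nearest $x$ on the $x$--$y_x$ path of $T_x$ that is adjacent to $v$ (which exists), and set $P_x$ to be that $x$--$y_x$ path. Then the subgraph of $G$ induced on $W=\{u,v\}\cup\bigcup_{x\in R'}V(P_x)$ is exactly $u$, $v$, possibly the edge $uv$, and $t$ internally vertex-disjoint paths $u$--$x$--$\cdots$--$y_x$--$v$ (when $x=y_x$ this is the length-$2$ path $u$--$x$--$v$): no other edges appear, since $u$'s only neighbor in $T_x$ is $x$, $v$'s only neighbor in $V(P_x)$ is $y_x$, each $P_x$ is induced, and the $P_x$ are pairwise non-adjacent. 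On this generalized theta graph I build a minimal fvs of size $\ge t-1$ directly: choose one internal vertex $z_i$ from all but one of the $t$ paths (from all $t$ of them if $uv\in E$); this is an fvs because every cycle is the union of two of the paths, or of $uv$ with one path, and each such cycle loses one of the chosen $z_i$, and it is minimal because $z_i$'s private cycle is its own path together with the omitted path (or with the edge $uv$). Lifting this set to $G$ by repeated use of Lemma~\ref{lem:ur1} gives a minimal fvs of $G$ of size $t-1=\Omega(n^{1/3})$, hence ratio $O(n^{2/3})$.

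Every step above is polynomial, so combining the cases gives the claimed $O(n^{2/3})$-approximation. The routine parts are the two averaging arguments and the small theta-graph construction; the step I expect to be the main obstacle is Case (b) — pinning down, for each lonely tree, a terminal with a neighbor in $S\setminus\{u\}$ (this is exactly where reducedness is indispensable) and then truncating the tree-paths at the first $v$-neighbor so that the induced subgraph on $W$ is a clean generalized theta graph rather than a structure with uncontrolled extra edges incident to $v$.
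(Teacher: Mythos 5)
Your proof is correct and takes essentially the same route as the paper's: reduce, take an arbitrary minimal fvs $S$, and if $|S|<n^{1/3}$ find a high-degree $u\in S$ and split on whether most of $u$'s forest-neighbors are good (share a tree with another neighbor of $u$) or lonely, handling the first case via the argument of Lemma~\ref{lem:onevertex} and the second via a theta-graph/$K_{2,s}$ construction around a second vertex $v\in S$. Your two small simplifications --- applying the contraction argument of Lemma~\ref{lem:onevertex} directly to $G[F_u\cup\{u\}]$ instead of re-reducing $G-(S\setminus\{u\})$ and separately arguing that good neighbors survive the re-reduction, and building the size-$(t-1)$ minimal fvs directly on the generalized theta graph instead of first contracting it to $K_{2,s}$ --- are both valid and mildly streamline the paper's exposition without changing the underlying argument.
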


\begin{proof} We are given a graph $G=(V,E)$. We begin by applying Lemmas
\ref{lem:sr1},\ref{lem:sr2} exhaustively in order to obtain a reduced graph
$G'=(V',E')$. Clearly, if we obtain a $|V'|^{1/3}$ approximation in $G'$, since
the reductions we applied do not change the optimal, and we can construct a
solution of the same size in $G$, we get a $|V'|^{2/3}\le |V|^{2/3}$
approximation ratio in $G$. So, in the remainder, to ease presentation, we
assume $G$ is already reduced and has $n$ vertices.

Our algorithm begins with an arbitrary minimal fvs $S$. This can be
constructed, for example, by starting with $S=V$ and removing vertices from $S$
until it becomes minimal. If $|S|\ge n^{1/3}$ then we return $S$. Since the
optimal solution cannot have size more than $n$, we achieve the claimed ratio.

Suppose then that $|S|<n^{1/3}$. Let $F$ be the corresponding forest. We have
$|F|>n-n^{1/3}>n/2$ for sufficiently large $n$. By Lemma
\ref{lem:manyneighbors}, $|N(S)\cap F|\ge n/8$. Since $|S|<n^{1/3}$ there must
exist $u\in S$ such that $u$ has at least $\frac{n^{2/3}}{8}$ neighbors in $F$.

Let $w\in F\cap N(u)$. We say that $w$ is a \emph{good} neighbor of $u$ if
there exists $w'\in F\cap N(u)$ with $w'\neq w$ and $w'$ is in the same tree of
$G[F]$ as $w$.  Otherwise $w$ is a bad neighbor of $u$. By extension, a tree of
$G[F]$ that contains a good (resp. bad) neighbor of $u$ will be called a good
(resp.  bad) tree. Every vertex of $N(u)\cap F$ is either good or bad.

We have argued that $|N(u)\cap F|\ge \frac{n^{2/3}}{8}$. We distinguish two
cases: either $u$ has at least $\frac{n^{2/3}}{16}$ good neighbors in $F$, or
it has at least that many bad neighbors in $F$.

In the former case, we delete from the graph the set $S\setminus \{u\}$ and apply
Lemmas \ref{lem:sr1}, \ref{lem:sr2} exhaustively again. We claim that the
number of good neighbors of $u$ does not decrease in this process.  Indeed, two
good neighbors of $u$ cannot be contracted using Lemma \ref{lem:sr2}, since
they have a common neighbor (namely $u$). Furthermore, suppose $w$ is the first
good neighbor of $u$ to be deleted using Lemma \ref{lem:sr1}. This would mean
that $w$ currently has no other neighbor except $u$. However, since $w$ is
good, initially there was a $w'\in N(u)$ in the same tree of $G[F]$ as $w$. The
vertex $w'$ has not been deleted (since we assumed $w$ is the first good
neighbor to be deleted). Furthermore, Lemmas \ref{lem:sr1}, \ref{lem:sr2}
cannot disconnect two vertices which are in the same component, so we get a
contradiction.  We therefore have a reduced graph, where $\{u\}$ is an fvs, and
$d(u)\ge \frac{n^{2/3}}{16}$. By Lemma \ref{lem:onevertex} we obtain a minimal
fvs of size at least $\frac{n^{2/3}}{32}$, which is an $O(n^{1/3})$
approximation.

In the latter case, $u$ has at least $\frac{n^{2/3}}{16}$ bad neighbors in $F$.
Consider a bad tree $T$. We claim that $T$ must have a neighbor in
$S\setminus\{u\}$, because $T$ has at least two leaves, at most one of which is
a neighbor of $u$ (since $T$ is bad). If the second leaf is not connected to
$S$, it will be deleted by Lemma \ref{lem:sr1}. Furthermore, since $u$ is
connected to one vertex in each bad tree, $u$ is connected to at least
$\frac{n^{2/3}}{16}$ bad trees.

We now find the vertex $v\in S\setminus\{u\}$ such that $v$ is connected to the
maximum number of bad trees connected to $u$. Since $|S|\le n^{1/3}$, $v$ must
be connected to at least $\frac{n^{1/3}}{16}$ bad trees connected to $u$. We
now delete from the graph the set $S\setminus\{u,v\}$ as well as all trees of
$G[F]$, except the bad trees connected to $u,v$.  Furthermore, in each bad tree
$T$ connected to both $u,v$ let $u'\in T\cap N(u)$ and $v'\in T\cap N(v)$ such
that $u',v'$ are as close as possible in $T$ (note that perhaps $v'=u'$).  We
delete all vertices of the tree $T$ except those on the path from $v'$ to $u'$.
Then, we contract all internal edges of this path (note that internal vertices
of the path are not connected to $\{u,v\}$ by the selection of $u',v'$).  It is
not hard to verify that, by using Lemmas \ref{lem:ur1}, \ref{lem:ur2}, if we
are able to produce a large minimal fvs in the resulting graph, we obtain a
solution for $G$.  Furthermore, in the resulting graph, every bad tree $T$
connected to $u,v$ has been reduced to a single vertex connected to $u,v$. So
the graph is now either a $K_{2,s}$, with $s\ge \frac{n^{1/3}}{16}$, or the
same graph with the addition of the edge $uv$. In either case, it is not hard to
see that starting with the fvs that contains all vertices except $\{u,v\}$, and
making it minimal, we obtain a solution of size at least $s-1$ which gives an
approximation ratio of $O(n^{2/3})$.  \end{proof}

\begin{corollary}\label{cor:extr} For any reduced graph $G$ on $n$ vertices we
have $\mmfvs(G)=\Omega( n^{1/3})$. \end{corollary}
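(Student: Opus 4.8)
The plan is to observe that the proof of Theorem~\ref{thm:algpoly} is entirely constructive and, crucially, never uses any upper bound on $\mmfvs(G)$ to \emph{build} its solution --- such a bound is invoked only at the very end, to translate the size of the produced set into an approximation \emph{ratio}. Hence, reading that proof with ``approximation ratio'' replaced by ``absolute size of the output'' already yields the corollary: on a reduced graph with $n$ vertices the algorithm outputs, in polynomial time, a minimal fvs of size $\Omega(n^{1/3})$, and since any minimal fvs of $G$ is a lower bound for $\mmfvs(G)$, we are done. So the real content is just to check that every branch of that proof delivers a set whose size is $\Omega(n^{1/3})$ measured against the \emph{original} $n$.

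Concretely, I would walk through the branches. If the initial arbitrary minimal fvs $S$ already has $|S|\ge n^{1/3}$, we are immediately done. Otherwise, exactly as in the theorem, $|F|>n/2$ and Lemma~\ref{lem:manyneighbors} gives $|N(S)\cap F|\ge n/8$, so some $u\in S$ has at least $n^{2/3}/8$ neighbours in $F$. In the ``many good neighbours'' branch we delete $S\setminus\{u\}$, re-apply the reduction rules (which, as shown in the theorem, do not decrease the number of good neighbours of $u$ and leave $d(u)$ untouched), and Lemma~\ref{lem:onevertex} produces a minimal fvs of size $\ge d(u)/2 \ge n^{2/3}/32$ in the new reduced graph; Lemma~\ref{lem:ur1} then lifts this back to a minimal fvs of $G$ of size at least $n^{2/3}/32$. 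In the ``many bad neighbours'' branch we reach --- after deletions and internal-path contractions handled by Lemmas~\ref{lem:ur1},~\ref{lem:ur2} --- a graph that is $K_{2,s}$ or $K_{2,s}$ plus the edge $uv$, with $s\ge n^{1/3}/16$, from which a minimal fvs of size $\ge s-1$ is extracted and again lifted back without shrinking. In every case the constructed minimal fvs has size $\Omega(n^{1/3})$, proving $\mmfvs(G)=\Omega(n^{1/3})$.

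The only point needing attention --- and it is already handled inside the proof of Theorem~\ref{thm:algpoly} --- is that the auxiliary deletions and contractions in the two main branches modify the graph, so one must confirm (i) that the size guarantees are phrased against the original $n$, which holds because $d(u)$ and the relevant neighbour counts survive those operations while only the explicitly removed small set $S\setminus\{u\}$ (resp.\ $S\setminus\{u,v\}$) is deleted, and (ii) that passing a minimal fvs back through Lemmas~\ref{lem:ur1} and~\ref{lem:ur2} never decreases its size. Both facts are already established, so no new argument is required; and if one wishes to also cover very small $n$, note that a non-empty reduced graph has minimum degree $\ge 2$ and hence contains a cycle, so $\mmfvs(G)\ge 1$, which absorbs the hidden constant. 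I therefore do not expect any genuine obstacle: the statement is a corollary in the literal sense, the work having been done in Theorem~\ref{thm:algpoly}.
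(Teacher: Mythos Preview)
Your proposal is correct and follows exactly the paper's approach: the paper's own proof is a single sentence observing that the algorithm of Theorem~\ref{thm:algpoly} always outputs a minimal fvs of size at least $n^{1/3}/c$ on a reduced $n$-vertex graph, and your write-up simply unpacks that observation branch by branch. One tiny imprecision: in the good-neighbours branch the reduction rules need not leave $d(u)$ literally untouched, but since the good neighbours survive you still get $d(u)\ge n^{2/3}/16$, which is all that is needed.
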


\begin{proof} We simply note that the algorithm of
Theorem \ref{thm:algpoly} always constructs a solution of size at least
$\frac{n^{1/3}}{c}$, where $c$ is a small constant, assuming that the original
$n$-vertex graph $G$ was reduced.  \end{proof}

\begin{remark}\label{rem:fvs}Corollary \ref{cor:extr} is tight.\end{remark}

\begin{proof} Take a $K_n$ and for every pair of vertices
$u,v$ in the clique, add $2n$ new vertices connected only to $u,v$. The graph
has order $n+2n{n\choose 2} = n+n^2(n-1) = n^3-n^2+n\ge n^3/2$. Any minimal fvs
of this graph must contain at least $n-2$ vertices of the clique. As a result
its maximum size is at most $n-2+2n\le 3n$.  We have $\frac{\mmfvs(G)}{|V(G)|}
\le \frac{6n}{n^3} = O(\frac{1}{n^{2}})$ therefore
$\mmfvs(G)=O(|V(G)|^{1/3})$.  \end{proof}

Theorem \ref{thm:algpoly} also implies the existence of a cubic kernel of \textsc{Max Min FVS} when parameterized by the solution size $k$.
Recall that the reduction
rules do not change the solution size. 
We suppose that the reduced graph has $n$ vertices.
For a small constant $c$, if $n\ge c^3k^3$, then we can always produce a solution
of size at least $n^{1/3}/c=k$, and thus the answer is YES. 
Otherwise, we have a cubic kernel. 
\begin{corollary}\label{cor:kernel} \textsc{Max Min FVS} admits a cubic kernel when parameterized by the solution size. \end{corollary}

Finally, we remark that a similar combinatorial point of view can be taken for
the related problem of \textsc{Max Min VC}, giving another
intuitive explanation for the difference in approximability between the two
problems.

\begin{remark}\label{rem:vc} Any graph $G=(V,E)$ without isolated vertices, has a
minimal vertex cover of size at least $\sqrt{|V|}$, and this is asymptotically
tight.  \end{remark}

\begin{proof} We will prove the statement under the
assumption that $G$ is connected. If not, we can treat each component
separately. If the components of $G$ have sizes $n_1,\ldots,n_k$, then we rely
on the fact that $\sum_{i=1}^k \sqrt{n_i} \ge \sqrt{\sum_{i=1}^kn_i}$ and that
the union of the minimal vertex covers of each component is a minimal vertex
cover of $G$.

If $G=(V,E)$ has a vertex $u$ of degree at least $\sqrt{n}$, then we begin with
the vertex cover $V\setminus\{u\}$ and remove vertices until it becomes
minimal. In the end, our solution contains a superset of $N(u)$, therefore we
have a minimal vertex cover of size at least $\sqrt{n}$ as promised. If, on the
other hand, $\Delta(G)<\sqrt{n}$, then any vertex cover of $G$ must have size
at least $\sqrt{n}$. Indeed, a vertex cover of size at most $\sqrt{n}-1$ can
cover at most $(\sqrt{n}-1)\sqrt{n} < n-1$ edges, but since $G$ is connected we
have $|E(G)|\ge n-1$. So, in this case, any minimal vertex cover has the
promised size.

To see that the bound given is tight, take a $K_n$ and attach $n$ leaves to
each of its vertices. This graph has $n^2+n$ vertices, but any minimal vertex
cover has size at most $2n$.  \end{proof}

\section{Sub-exponential Time Approximation}

In this section we give an approximation algorithm that generalizes our
$n^{2/3}$-approximation and is able to guarantee any desired performance, at
the cost of increased running time. On a high level, our initial approach again
constructs an arbitrary minimal fvs $S$ and if $S$ is clearly large enough,
returns it. However, things become more complicated from then on, as it is no
longer sufficient to consider vertices of $S$ individually or in pairs. We
therefore need several new ideas, one of which is given in the following lemma,
which states that we can find a constant factor approximation in time
exponential in the size of a given fvs. This will be useful as we will use the
assumption that $S$ is ``small'' and then cut it up into even smaller pieces to
allow us to use Lemma \ref{lem:smallfvs}.

\begin{lemma}\label{lem:smallfvs} Given a graph $G=(V,E)$ on $n$ vertices and a
feedback vertex set $S\subseteq V$ of size $k$, it is possible to produce a
minimal fvs $S'$ of $G$ of size $|S'|\ge \frac{\mmfvs(G)}{3}$ in time
$n^{O(k)}$.  \end{lemma}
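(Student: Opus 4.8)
The plan is to reduce the problem to a situation where we can enumerate the "interface" between a candidate solution and the small fvs $S$, and then solve what remains optimally in polynomial time. Fix an optimal minimal fvs $S^*$ with $|S^*| = \mmfvs(G)$, and let $F^* = V \setminus S^*$ be the corresponding maximal induced forest. The key observation is that $S$ has size $k$, so $S$ interacts with $S^*$ and $F^*$ in only a bounded number of ways: each vertex of $S$ is either in $S^*$ or in $F^*$, which gives $2^k = n^{O(k)}$ possibilities to guess. The plan is to branch over all $2^k$ subsets $S_{\mathrm{in}} \subseteq S$, positing that $S_{\mathrm{in}} = S \cap S^*$ and $S_{\mathrm{out}} = S \setminus S_{\mathrm{in}} \subseteq F^*$, and for each branch build a large minimal fvs that is consistent with this guess.

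First I would handle, within a fixed branch, the part of the graph outside $S$. Since $S$ is an fvs, $G - S$ is a forest, and $G - S_{\mathrm{in}}$ has feedback vertex number at most $|S_{\mathrm{out}}| \le k$; more importantly, after removing $S_{\mathrm{in}}$ (which we have decided to put into our solution), the vertices of $S_{\mathrm{out}}$ must be forced to lie in the forest, so we may further restrict attention to the graph $G' = G - S_{\mathrm{in}}$ together with the constraint that $S_{\mathrm{out}}$ stays out. The remaining task is: in $G'$, find a maximum-size set $T$ with $S_{\mathrm{in}} \cap T = \emptyset$ (already handled), $S_{\mathrm{out}} \cap T = \emptyset$, such that $S_{\mathrm{in}} \cup T$ is a minimal fvs of $G$. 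The forest $F' = G'[V \setminus S_{\mathrm{out}}]$ together with a bounded number of "anchor" vertices is where the optimization lives. The approach for this sub-step is to observe that $G'$ restricted to $V \setminus (S_{\mathrm{in}})$ is a graph of treewidth $O(k)$ (it is a forest plus at most $k$ vertices), so a maximum minimal fvs of it, subject to the side constraints that $S_{\mathrm{out}}$ is excluded and that vertices of $S_{\mathrm{in}}$ retain a private cycle, can be computed by standard dynamic programming over a tree decomposition of width $O(k)$, running in time $n^{O(k)}$ (or even $2^{O(k)} n^{O(1)}$). The minimality condition — every chosen vertex has a private cycle, and no chosen vertex is redundant — is a local condition that the DP can track by remembering, for each bag, which of the $O(k)$ "special" vertices currently have a witnessed private cycle.

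The reason this yields a $3$-approximation rather than an exact answer is the usual accounting trick. In the branch where $S_{\mathrm{in}} = S \cap S^*$ guessed correctly, our DP finds the largest minimal fvs of $G$ that (i) contains $S_{\mathrm{in}}$, (ii) avoids $S_{\mathrm{out}}$; but $S^*$ itself is not necessarily of this restricted form because $S^*$ might contain or avoid other vertices of $S$ in a way we have forced, which it does by the guess, so actually $S^*$ \emph{is} of this form, and the DP returns something of size $\ge |S^*|$. Wait — the subtlety, and the reason for the factor $3$, is that after fixing the branch we may not be able to keep \emph{all} of $S^*$: enforcing $S_{\mathrm{out}} \subseteq F$ can destroy private cycles of $S^*$-vertices, so when we re-minimize we could lose some. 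The clean way I would argue the factor $3$ is as follows: partition $S^* = (S^* \cap S_{\mathrm{in}}) \cup (S^* \cap F') \cup (\text{rest})$; build three candidate minimal fvs's, each recovering one of the three parts up to constant factors via Lemmas \ref{lem:ur1} and \ref{lem:ur2} and the treewidth-DP, and return the largest; since the three parts cover $S^*$, one of them has size $\ge |S^*|/3$.

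The main obstacle I anticipate is making the treewidth dynamic program correctly certify \emph{minimality} of the returned fvs while simultaneously enforcing the inclusion/exclusion constraints coming from the branch — in particular ensuring that forced-in vertices of $S_{\mathrm{in}}$ are not redundant, and that the re-minimization step never deletes a vertex we are counting on. Handling redundancy in a tree-decomposition DP is delicate because a private cycle can be "long" and span many bags; the standard remedy is to guess, for each of the $O(k)$ special vertices, a single edge or a small connector on its private cycle and thread that information through the DP state, which keeps the state size $n^{O(k)}$. The rest (the reduction rules, the $2^k$ branching, and the three-way averaging) is routine given the lemmas already proved.
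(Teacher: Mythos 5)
Your high-level plan — branch over the $2^k$ ways $S$ can split across $S^*$ and $F^*$, then solve a residual problem whose structure is controlled by $S$ — is in the same spirit as the paper's proof, which also begins by guessing $F'=S\cap F_{OPT}$. But from there your execution has two genuine gaps.

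First, the proof's central load-bearing step is delegated to a ``standard dynamic programming over a tree decomposition of width $O(k)$'' that computes a maximum minimal fvs subject to inclusion/exclusion constraints \emph{and} the requirement that the deleted vertices of $S_{\mathrm{in}}$ retain private cycles. This is not standard. A treewidth DP for \textsc{Max Min FVS} must certify a private cycle for \emph{every} vertex in the returned solution (not just the $O(k)$ special ones), which means threading, through the bags, per-vertex pending pairs of forest components that still need to be joined, together with the external constraints coming from the deleted $S_{\mathrm{in}}$-vertices. The paper explicitly flags that it could have formulated such a DP but chose not to, precisely because getting all of this right is more involved than what the lemma needs. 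You acknowledge the delicacy but do not resolve it; as written, the $n^{O(k)}$ claim for the DP is an assertion, not an argument. The paper sidesteps the issue entirely: after additional polynomial-in-$n$ guessing (two forest-neighbors per $S_{SOL}$-vertex, then up to $2k$ connectors), it contracts $F_{SOL}$ to a single vertex $w$, obtaining a graph $H$ with an fvs of size $1$ (treewidth $2$), and solves $H$ exactly.

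Second, your accounting for the factor $3$ is internally inconsistent. If the constrained DP were exact, then in the correct branch it would return something of size at least $|S^*|$ and you would have an exact algorithm, not a $3$-approximation — you half-notice this, retract, and then propose a ``three-way averaging'' over a partition $S^*=(S^*\cap S_{\mathrm{in}})\cup(S^*\cap F')\cup(\text{rest})$. That partition does not parse: $S^*$ already splits into $S^*\cap S$ and $S^*\setminus S$, and it is unclear what ``rest'' is or what fvs ``recovers'' each part. The actual source of the factor $3$ in the paper is different and concrete: the contracted graph $H$ deliberately forgets the component structure of $F_{SOL}$, so when one re-minimizes $S_{SOL}\cup S^*$ in $G$, some vertices may become redundant. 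The paper bounds this loss by $2k$ via a component-counting argument (every deletion strictly decreases the number of components of $G[F_{SOL}]$, of which there are at most $2k$ by Invariant~3), and combines this with the assumption $k\le \mmfvs(G)/3$ (otherwise $S$ itself is already a $3$-approximation). This loss-bounding argument, and the invariants that make it work, are the technical content of the proof; your proposal does not contain a substitute for them.
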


\begin{proof} Before we begin, let us point out that
for $k= 1$, \textsc{Max Min FVS} can be solved optimally in time $O(n)$, using
standard arguments from parameterized complexity, namely the fact that in this
case $G$ has treewidth $2$, and invoking Courcelle's theorem, since the
properties ``$S$ is an fvs'' and ``$S$ is minimal'' are MSO-expressible
\cite{Cygan2015}.  Unfortunately, this type of argument is not good enough for
larger values of $k$, as the running time guaranteed by Courcelle's theorem
could depend super-exponentially on $k$. We could try to avoid this by
formulating a treewidth-based DP algorithm to obtain a better running time, but
we prefer to give a simpler more direct branching algorithm, since this is good
enough for Theorem \ref{thm:subexp}.

We will assume that $S$ is minimal (if not, we can remove vertices from it to
make it minimal and this only decreases the available running time of our
algorithm). As a result, we assume that $\mmfvs(G)\ge 3k$, as otherwise $S$ is
already a $3$-approximation.

Let $S_{OPT}$ be a maximum minimal fvs of $G$, and $F_{OPT}=V\setminus
S_{OPT}$. We formulate an algorithm that maintains two disjoint sets of
vertices $S_{SOL}, F_{SOL}$ which, intuitively, correspond to vertices we have
decided to place in the fvs or the induced forest, respectively. We will denote
$U:=V\setminus (S_{SOL}\cup F_{SOL})$ the set of undecided vertices. Our
algorithm will be non-deterministic, that is, it will sometimes ``guess'' some
vertices of $U$ that will be placed in $S_{SOL}$ or $F_{SOL}$. We will bound
the total number of guessing possibilities by $n^{O(k)}$, which will imply that
the algorithm can be made deterministic by trying all possibilities for every
guess and returning the best returned solution.

Throughout the algorithm, we will work to maintain the following invariants:

\begin{enumerate}

\item\label{it:1} $S_{SOL}\cup F_{SOL}$ is an fvs of $G$.

\item\label{it:2}$S_{SOL}\subseteq S_{OPT}$ and $F_{SOL}\subseteq F_{OPT}$.

\item\label{it:3} $G[F_{SOL}]$ is acyclic and has at most $2k$ components.

\item\label{it:4} All vertices of $S_{SOL}$ have at least two neighbors in
$F_{SOL}$.

\end{enumerate}

To begin, we guess a set $F'\subseteq S$ such that $G[F']$ is acyclic and set
$F_{SOL}= F'$ and $S_{SOL}=S\setminus F'$. Property \ref{it:1} is satisfied as
$F_{SOL}\cup S_{SOL}=S$. Property \ref{it:2} is satisfied for the guess
$F'=F_{OPT}\cap S$. If there exists $u\in S_{SOL}$ which does not satisfy
Property \ref{it:4}, we guess one or two vertices from $N(u)\cap U$ and place
them into $F_{SOL}$ so that $u$ has two neighbors in $F_{SOL}$. Since $u$ has a
private cycle in $G[F_{OPT}]$, if the vertices we guessed are the neighbors of
$u$ in that cycle, we maintain Property \ref{it:2}. We continue in this way
until Property \ref{it:4} is satisfied. We now observe that $F_{SOL}$ is
acyclic (as $F_{SOL}\subseteq F_{OPT}$), and that since we have added at most
two vertices for each vertex of $S_{SOL}$, it contains at most $2k$ vertices,
hence at most $2k$ components, so we satisfied Property \ref{it:3}. So far, the
total number of possible guesses is upper-bounded by $2^kn^{2k}$: $2^k$ for
guessing $F'$ and $n^{2k}$ for guessing at most two neighbors for each $u\in
S_{SOL}$.

We will now say that a ``connector'' is a path $P\subseteq F_{OPT}\setminus
F_{SOL}$, such that $G[F_{OPT}\cup P]$ has strictly fewer components that
$G[F_{SOL}]$. Our algorithm will now repeateadly guess if a connector exists,
and if it does it will guess the first and last vertex $u,v$ of $P$. Note that
$u,v\in U$ and if we guess $u,v$ correctly we can infer all of $P$, as $G[U]$
is acyclic, so there is at most one path from $u$ to $v$ in $G[U]$. We set
$F_{SOL}:=F_{SOL}\cup P$ and continue guessing, until we guess that no
connector exists. Observe that guessing the endpoints of a connector gives $n^2$
possibilities, and that adding a connector to $F_{SOL}$ decreases the number of
connected components of $F_{SOL}$, which can happen at most $2k$ times by
Property \ref{it:3}. So we have a total of $n^{O(k)}$ possible guesses and for
the correct guess Property \ref{it:2} is maintained.

We now consider every vertex of $u\in U$ that has at least two neighbors in
$F_{SOL}$ and place all such vertices in $S_{SOL}$. Properties \ref{it:1},
\ref{it:3}, and \ref{it:4} are trivially still satisfied. Furthermore, if our
guesses so far are correct, all such vertices $u$ belong in $S_{OPT}$, as they
either already have a private cycle in $F_{OPT}$, or if they have neighbors in
distinct components of $F_{SOL}$, they would function as connectors in
$F_{OPT}$ (and we assume we have correctly guessed that no more connectors
exist).

We are now in a situation where every vertex of $U$ has at most one neighbor in
$F_{SOL}$. We construct a new graph $H$ by deleting from $G$ all of $S_{SOL}$
and replacing $F_{SOL}$ by a single vertex $f$ that is connected to
$N(F_{SOL})$. Note that $H$ is a simple graph (it has no parallel edges) with
an fvs of size $1$ (as $H-w$ is acyclic). We therefore use the aforementioned
algorithm implied by Courcelle's theorem to produce a maximum minimal fvs of
$H$ which, without loss of generality, does not contain $w$. Let $S^*\subseteq
U$ be this set. In $G$, we check if $S_{SOL}\cup S^*$ is an fvs. If it is we
delete vertices from it (if necessary) to make it redundant and return the
resulting set $S^{**}$, which is a minimal fvs.

To see that the resulting solution has the desired size we focus on the case
where all guesses were correct and therefore Properties \ref{it:1}-\ref{it:4}
were maintained throughout the execution of the algorithm. As mentioned, since
the total number of possibilities considered in $n^{O(k)}$, a deterministic
algorithm can simply try out all possible choices and return the best solution.

We first observe that $\mmfvs(H)\ge \mmfvs(G)-|S_{SOL}|$, where $S_{SOL}$ is
the set of vertices we deleted from $G$ to obtain $H$. Indeed,
$S_0:=S_{OPT}\setminus S_{SOL}$ is a minimal fvs of $H$. To see that $S_0$ is
an fvs, suppose that $H$ contains a cycle after deleting $S_0$. This cycle must
necessarily go through $w$. Let $P$ be the vertices of this cycle except $w$.
We have $P\subseteq U\setminus S_{OPT}$ therefore, $P\subseteq F_{OPT}$.
However, this means either that $P$ forms a cycle with a component of $F_{SOL}$
(which contradicts the acyclicity of $F_{OPT}$ by Property \ref{it:2}), or that
$P$ is a connector, which contradicts our guess that no other connector exists.
Therefore, $S_0$ must be an fvs of $H$. To see that it is minimal we note that
for all $u\in S_0$ there exists a private cycle in $G[U\cup F_{SOL}\cup\{u\}]$,
and this cycle is not destroyed by contracting the vertices of $F_{SOL}$ into~$w$.

We now have that $|S^*\cup S_{SOL}| \ge |S_{OPT}|$, because $|S^*|\ge
|S_{OPT}\setminus S_{SOL}|$. We argue that in the process of making $S^*$
minimal to obtain $S^{**}$ we delete at most $2k$ vertices. Indeed, every time
a vertex $u$ of $S_{SOL}$ is removed from $S^*\cup S_{SOL}$ as redundant, since
$u$ has at least two neighbors in $F_{SOL}$ by Property \ref{it:4}, the number
of components of $G[F_{SOL}]$ must decrease. Similarly, if we remove a vertex
$u\in S^*$ as redundant, we consider the private cycle of $u$ in $H\setminus
S^*$. All of the vertices of this cycle are present in $G$ after we delete
$S^*$, except $w$, therefore, this cycle forms a path between two distinct
components of $G[F_{SOL}]$. We conclude that, since removing a vertex from our
fvs decreases the number of connected components of $G[F_{SOL}]$, by Property
\ref{it:3} we have $|S^{**}|\ge |S_{OPT}|-2k$. But recall that we have assumed
that $k\le \frac{S_{OPT}}{3}$ (otherwise $S$ was already a sufficiently good
approximation), so we have $|S^{**}|\ge \frac{\mmfvs(G)}{3}$.  \end{proof}

\begin{theorem}\label{thm:subexp} There is an algorithm which, given an
$n$-vertex graph $G=(V,E)$ and a value $r$, produces an $r$-approximation for
\textsc{Max Min FVS} in $G$ in time $n^{O(n/r^{3/2})}$.  \end{theorem}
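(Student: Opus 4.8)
The plan is to generalize the polynomial-time algorithm of Theorem~\ref{thm:algpoly} by replacing its three hardcoded thresholds (the cutoff $n^{1/3}$ on $|S|$, the cutoff $n^{2/3}$ on $d(u)$, and the treatment of pairs of vertices) with thresholds that depend on $r$, and by invoking Lemma~\ref{lem:smallfvs} whenever the current fvs is small enough that $n^{O(k)}$ is within the budget $n^{O(n/r^{3/2})}$. As before, we first apply Lemmas~\ref{lem:sr1},~\ref{lem:sr2} exhaustively to obtain a reduced graph; by Corollary~\ref{cor:extr} we may assume $\mmfvs(G)=\Omega(n^{1/3})$, and also assume $r\le n^{2/3}$ (otherwise Theorem~\ref{thm:algpoly} already suffices and the running time $n^{O(n/r^{3/2})}$ is polynomial). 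Compute an arbitrary minimal fvs $S$. If $|S|\ge \mmfvs(G)/r$ we are done---but since we do not know $\mmfvs(G)$, we instead phrase this as: if $|S|\ge n/r$ return $S$ (this is an $r$-approximation since $\mmfvs(G)\le n$). Also, if $|S|\le c\cdot n/r^{3/2}$ for the implied constant, then Lemma~\ref{lem:smallfvs} runs in time $n^{O(n/r^{3/2})}$ and returns a $3$-approximation, which is certainly an $r$-approximation; so we may assume $n/r^{3/2}\ll |S|< n/r$.

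In this remaining regime we want to mimic the two-case analysis of Theorem~\ref{thm:algpoly}. By Lemma~\ref{lem:manyneighbors}, $|N(S)\cap F|\ge |F|/4 = \Omega(n)$, so by averaging some $u\in S$ has $d_F(u)=\Omega(|F|/|S|)=\Omega(r)$ neighbors in the forest. Split these into good and bad neighbors as before. In the \emph{good} case, delete $S\setminus\{u\}$, re-reduce, and apply Lemma~\ref{lem:onevertex} to get a minimal fvs of size $\Omega(r)$; since we are guaranteed $\mmfvs(G)\le n$ and in fact we only need to beat $\mmfvs(G)/r$, and $\mmfvs(G)\le$ (roughly) $n$ while the solution has size $\Omega(r)$, we need $r^2 = \Omega(n)$---this is \emph{not} true in general, which is exactly why the one-vertex argument alone does not give ratio $r$ and we need the more involved handling that the authors warn about. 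So the good case must itself be pushed further: rather than keeping only $u$, we keep a \emph{set} $A\subseteq S$ of $\Theta(\sqrt{n}/\sqrt{r})$ (or more precisely $\Theta(n/r\cdot \text{something})$) vertices each with $\Omega(r)$ forest-neighbors, delete $S\setminus A$, re-reduce, and then $A$ is a small fvs of size $O(n/r^{3/2})$, so we apply Lemma~\ref{lem:smallfvs} to the resulting graph $G''$ whose optimum is, by Lemmas~\ref{lem:ur1},~\ref{lem:ur2}, still at least $\mmfvs(G)$ up to the deletions---but deletions can hurt, so one must argue that $\mmfvs(G'')\ge \mmfvs(G)/r$ directly, e.g.\ using that $A$'s vertices have many private-cycle-supporting neighbors retained. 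The symmetric \emph{bad} case is handled analogously: bad trees attach to $S\setminus\{u\}$, and by averaging over a suitably-sized subset of $S$ one finds enough bad trees shared among few vertices to either build a large $K_{2,s}$-type solution or to reduce to an instance with a small fvs and invoke Lemma~\ref{lem:smallfvs}.

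Concretely, I would organize the proof around a single dichotomy: either $S$ is ``large enough'' ($|S|\ge n/r$, return it) or ``small enough'' ($|S|=O(n/r^{3/2})$, run Lemma~\ref{lem:smallfvs}) or ``intermediate'', and in the intermediate case show how to pass (via reductions that don't decrease the optimum, plus careful vertex/tree deletions whose effect on the optimum is bounded) to a graph $G^\star$ with an fvs of size $O(n/r^{3/2})$ and with $\mmfvs(G^\star)\ge \mmfvs(G)/r$, then apply Lemma~\ref{lem:smallfvs} to $G^\star$ for a $3$-approximation of $\mmfvs(G^\star)$, hence a $3r$-approximation of $\mmfvs(G)$ (absorb the constant into the $O(\cdot)$ of the ratio, or tighten thresholds). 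The total running time is $n^{O(n/r^{3/2})}$ from the one call to Lemma~\ref{lem:smallfvs}, plus polynomial overhead.

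\medskip

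\noindent\textbf{Main obstacle.} The delicate point---flagged by the authors when they say ``it is no longer sufficient to compare the size of our solution to $n$''---is the intermediate case: after deleting most of $S$ and re-reducing, I must certify that the optimum of the smaller instance is still at least $\mmfvs(G)/r$, even though vertex deletions (Lemma~\ref{lem:ur1}) only give the inequality in the wrong direction for the instances I discard. The fix is to never actually lower-bound $\mmfvs(G^\star)$ by $\mmfvs(G)$ via monotonicity, but instead to exhibit, inside $G^\star$, an explicit minimal fvs of size $\Omega(n/r)$ built from the retained forest-neighbors of the kept vertices of $S$ (or the retained paths through kept bad trees)---essentially re-running the constructions of Lemma~\ref{lem:onevertex} and the $K_{2,s}$ argument at scale, and checking that the $\Theta(n/r^{3/2})$ kept vertices of $S$ collectively witness $\Omega(n/r)$ private cycles. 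Getting the arithmetic of the three scales ($|S|$ between $n/r^{3/2}$ and $n/r$; kept subset of size $n/r^{3/2}$; resulting optimum $n/r$) to line up so that Lemma~\ref{lem:smallfvs}'s $3$-approximation translates to an $r$-approximation is the crux, but it is bookkeeping once the structural claim is in place.
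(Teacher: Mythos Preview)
Your proposal has the right skeleton—reduce to an instance with an fvs of size $O(n/r^{3/2})$ and invoke Lemma~\ref{lem:smallfvs}—but the central step is missing and your proposed fix does not close the gap. In the intermediate regime $n/r^{3/2}\ll |S|<n/r$ you want to choose a \emph{single} subset $A\subseteq S$ of size $O(n/r^{3/2})$, delete $S\setminus A$, and certify $\mmfvs(G^\star)\ge\mmfvs(G)/r$ by ``exhibiting, inside $G^\star$, an explicit minimal fvs of size $\Omega(n/r)$'' built from retained forest-neighbors. But you have already (correctly) computed that the good-neighbor construction yields $\Omega(r)$ and the bad-neighbor pair construction yields $\Omega(r^2/n)$; neither is $\Omega(n/r)$ unless $r=\Omega(n^{2/3})$, i.e.\ the polynomial-time regime. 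Aggregating over a set $A$ of size $n/r^{3/2}$ does not save this: you give no mechanism guaranteeing that the kept vertices collectively witness $\Omega(n/r)$ private cycles, and in general no single deterministic choice of $A$ can work, because the private cycles of $S_{OPT}$ may route through $S$ entirely within the part you discarded. Exhibiting a solution of size $\Omega(n/r)$ is again a comparison with $n$, which is exactly what the paper warns is insufficient here.

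The paper's actual idea is different: rather than picking one $A$, it partitions $S$ into $k=\lceil\sqrt{r}\,\rceil$ equal parts $S_1,\ldots,S_k$ and runs Lemma~\ref{lem:smallfvs} on \emph{every} $G_{i,j}:=G-(S\setminus(S_i\cup S_j))$, of which there are only ${k+1\choose 2}\le r$. The analysis assigns to each $u\in S_{OPT}\cap F$ a ``type'' $(i,j)$ determined by the two parts of $S$ first hit when traversing $u$'s private cycle $c(u)$ in either direction; by pigeonhole some type accounts for $|S_{OPT}\cap F|/r$ vertices, and those vertices (losing at most $|S_i\cup S_j|$ of them, by a component-counting argument) survive into a minimal fvs of $G_{i,j}$. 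This gives $\mmfvs(G_{i,j})=\Omega(|S_{OPT}|/r)$ directly against the optimum, which is the missing ingredient in your approach.
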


\begin{proof}

First, let us note that we may assume that $r$ is $\omega(1)$, because if $r$
is bounded by a constant, then we can solve the problem exactly in the given
time.  To ease presentation, we will give an algorithm with approximation ratio
$O(r)$. A ratio of exactly $r$ can be obtained by multiplying $r$ with an
appropriate (small) constant.

Our algorithm borrows several of the basic ideas from Theorem
\ref{thm:algpoly}, but requires some new ingredients (including Lemma
\ref{lem:smallfvs}). The first step is, again, to construct a minimal fvs $S$
in some arbitrary way, for example by setting $S=V$ and then removing vertices
from $S$ until it becomes minimal. If $|S|\ge n/r$ we are done, as we already
have an $r$-approximation, so we simply return $S$. From this point, this
algorithm departs from the algorithm of Theorem \ref{thm:algpoly}, because it
is no longer sufficient to compare the size of the returned solution with a
function of $n$ (we need to compare it to the actual optimal in order to obtain
a ratio of $r$), and because we need to partition $S$ into non-trivial parts
that contain more than one vertex. The algorithm proceeds as follows:

Let $k=\lceil \sqrt{r}\  \rceil$ and partition $S$ into $k$ parts of (almost)
equal size $S_1,\ldots,S_k$. Our algorithm proceeds as follows: for each
$i,j\in\{1,\ldots,k\}$ (not necessarily distinct) consider the graph $G_{i,j}$
obtained by deleting all vertices of $S\setminus (S_i\cup S_j)$. Compute, using
Lemma \ref{lem:smallfvs} a  solution for $G_{i,j}$, taking into account that
$S_i\cup S_j$ is a feedback vertex set of this graph. Output the largest of the
solutions found, using Lemma \ref{lem:ur1} to transform them into solutions of
$G$ (or output $S$ if it is larger than all solutions).

The algorithm clearly runs in the promised time: $|S_i\cup S_j|\le
\frac{2n}{rk}$, so the algorithm of Lemma \ref{lem:smallfvs} takes time
$n^{O(n/r^{3/2})}$ and is executed a polynomial number of times.

Let us now analyze the approximation ratio of the produced solution. Let
$S_{OPT}$ be an optimal solution and let $F:=V\setminus S$ and
$F_{OPT}=V\setminus S_{OPT}$ be the induced forests corresponding to $S$ and to
the optimal solution. We would like to argue that one of the considered
subproblems contains at least a $\frac{1}{r}$ fraction of $S_{OPT}$ and that
most (though not all) of these vertices form part of a minimal fvs of that
subgraph.

To be more precise, we will define the notion of ``type`` for each $u\in
S_{OPT}\cap F$. For each such $u$ there must exist a cycle in the graph
$G[F_{OPT}\cup\{u\}]$ (if not, this would contradict the minimality of
$S_{OPT}$). Call this cycle $c(u)$ (select one such cycle arbitrarily if
several exist). The cycle $c(u)$ must intersect $S$, as $S$ is an fvs. Let $v$
be the vertex of $c(u)\cap S$ closest to $u$ on the cycle.  Let $v'$ be the
vertex of $c(u)\cap S$ that is closest to $u$ if we traverse the cycle in the
opposite direction (note that $v,v'$ are not necessarily distinct). Suppose
that $v\in S_i, v'\in S_j$ and without loss of generality $i\le j$. We then say
that $u\in S_{OPT}\cap F$ has type $(i,j)$.  In this way, we define a type for
each $u\in S_{OPT}\cap F$. Note that according to our definition, all internal
vertices of the path in $c(u)$ from $u$ to $v$ (and also from $u$ to $v'$)
belong in $F_{OPT}\cap F$.

According to the definition of the previous paragraph, there are $k(k+1)/2\le
r$ possible types of vertices in $S_{OPT}\cap F$.  Therefore, there must be a
type $(i,j)$ such that at least $\frac{|S_{OPT}\cap F|}{r}$ vertices have this
type.  We now concentrate on the graph $G_{i,j}$, for the type $(i,j)$ which
satisfies this condition. Our algorithm constructed $G_{i,j}$ by deleting all
of $S$ except $S_i\cup S_j$.  We would like to claim that this graph has a
minimal feedback vertex set of size comparable to $\frac{|S_{OPT}\cap F|}{r}$.

For the sake of the analysis, construct a minimal feedback vertex set $S^*$ of
$G_{i,j}$ as follows: we begin with the fvs $S^*=S_{OPT}\cap(F\cup S_i\cup
S_j)$ and the corresponding induced forest $F^*=F_{OPT}\cap(F\cup S_i\cup
S_j)$. The set $S^*$ is a feedback vertex set as it contains all vertices of
$S_{OPT}$ found in $G_{i,j}$ and $S_{OPT}$ is a feasible feedback vertex set of
all of $G$. We then make $S^*$ minimal by arbitrarily removing redundant
vertices. Call the resulting set $S^{**}\subseteq S^*$ and the corresponding
induced forest $F^{**}\supseteq F^*$.

Our main claim now is that the number of vertices of $S^*\cap F$ of type
$(i,j)$ which were ``lost'' in the process of making $S^*$ minimal, is
upper-bounded by $|S_i\cup S_j|$. Formally, we claim that 
$|\{u\in(S^*\cap F)\setminus S^{**} \ |\ u \textrm{ has type } (i,j) \}|\le
|S_i\cup S_j|$.  Indeed, consider such a vertex $u\in (S^*\cap F)\setminus
S^{**}$ of type $(i,j)$, let $c(u)$ be the cycle that defines its type and
$v,v'$ the vertices of $S_i\cup S_j$ which are closest to $u$ on the cycle in
either direction. All vertices of $c(u)$ in the paths from $u$ to $v$ and from
$u$ to $v'$ belong to $F_{OPT}\cap F$, therefore also to $F^*$.  If $u$ was
removed as redundant, this means that $v,v'$ must have been in distinct
connected components at the moment $u$ was removed from the feedback vertex set
(and also that $v,v'$ are distinct). However, the addition of $u$ to the
induced forest creates a path from $v$ to $v'$ in the induced forest and hence
decreases the number of connected components (that is, trees in the induced
forest) containing vertices of $S_i\cup S_j$. The number of such connected
components cannot decrease more than $|S_i\cup S_j|$ times, therefore, during
the process of making $S^*$ minimal we may have removed at most $|S_i\cup S_j|$
vertices of type $(i,j)$ from $S^*\cap F$.

Using the above analysis and the assumption that $S^*$ contains at least
$\frac{|S_{OPT}\cap F|}{r}$ vertices of type $(i,j)$, we conclude that
$\mmfvs(G_{i,j})\ge |S^{**}|\ge \frac{|S_{OPT}\cap F|}{r} - |S_i\cup S_j|$. We
now note that if $|S_{OPT}\cap S|\ge \frac{|S_{OPT}|}{r}$, then $S$ is already
an $r$-approximation, so it is safe to assume $|S_{OPT}\cap F|\ge
\frac{(r-1)|S_{OPT}|}{r}$. Furthermore, $|S_i\cup S_j| \le
\frac{2|S|}{\sqrt{r}} \le \frac{2|S_{OPT}|}{r\sqrt{r}}$, where again we are
assuming that $S$ is not already an $r$-approximation. Putting things together
we get $\mmfvs(G_{i,j})\ge \frac{(r-1)|S_{OPT}|}{r^2} -
\frac{2|S_{OPT}|}{r\sqrt{r}} \ge \frac{|S_{OPT}|}{2r}$, for sufficiently large
$r$. Hence, since the algorithm will return a solution that is at least as
large as $\frac{\mmfvs(G_{i,j})}{3}$, we obtain an $O(r)$-approximation.
\end{proof}

\section{Hardness of Approximation and NP-hardness}\label{sec:inapprox}

In this section we establish lower bound results showing that the approximation
algorithms given in Theorems \ref{thm:algpoly} and \ref{thm:subexp} are
essentially optimal, under standard complexity assumptions.

\subsection{Hardness of Approximation in Polynomial Time}

We begin by showing that the best approximation ratio achievable in polynomial
time is indeed (essentially) $n^{2/3}$. For this, we rely on the celebrated
result of H{\aa}stad on the hardness of approximating \textsc{Max Independent
Set}, which was later derandomized by Zuckerman, cited below.

\begin{theorem}\cite{Hastad99,Zuckerman07} For any $\epsilon>0$, there is no
polynomial time algorithm which approximates \textsc{Max Independent Set} with
a ratio of $n^{1-\epsilon}$, unless $\text{P}=\text{NP}$. \end{theorem}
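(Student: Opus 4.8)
\emph{Proof strategy (sketch only; this is the classical theorem of H{\aa}stad, subsequently derandomized by Zuckerman).} The plan is to route through the PCP machinery: (i) start from a PCP characterization of NP, (ii) use the FGLSS reduction to turn a PCP verifier into a graph whose maximum independent set encodes the maximum acceptance probability, and (iii) amplify the soundness error in a \emph{randomness-efficient} way so that the resulting inapproximability factor becomes as large as $n^{1-\epsilon}$ on a polynomial-size graph.

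Concretely, I would first invoke the PCP theorem together with Raz's parallel repetition theorem to obtain, for every language in NP, a probabilistically checkable proof system (equivalently a two-prover one-round game) with perfect completeness, some fixed constant soundness error $s<1$, logarithmic randomness $r=O(\log n)$, a constant number of queries, and a polynomial-size answer alphabet. The FGLSS reduction then builds a graph $H$ whose vertices are the accepting local views of the verifier (about $2^{r}$ of them), with an edge between two views exactly when they are mutually consistent. A yes-instance yields an independent set of size $2^{r}$ (take the view for each random string induced by a correct proof), while a no-instance has maximum independent set at most $s\cdot 2^{r}$. This already gives a hardness factor $1/s$ on a graph with $|V(H)|\approx 2^{r}$, but the factor is only a constant.

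The hard part, where essentially all the work lies, is pushing this constant gap up to $n^{1-\epsilon}$ for every $\epsilon>0$ while keeping $n$ polynomial. Naive $\ell$-fold parallel repetition multiplies the randomness by $\ell$, so $|V(H)|$ becomes $2^{\ell r}$ (superpolynomial once $\ell$ is superconstant), while the soundness drops only to $s^{\Omega(\ell)}$ — too weak. H{\aa}stad's contribution is a randomness-recycling amplification (a derandomized parallel repetition implemented by random walks on an expander, i.e.\ a disperser/sampler) performing $\ell$ rounds using only $r+O(\ell)$ random bits, with a long-code-based inner verifier keeping the alphabet small; then $|V(H)|\approx 2^{r+O(\ell)}$ while the soundness is $2^{-\Omega(\ell)}$, and tuning $\ell$ relative to $r$ forces the ratio above $|V(H)|^{1-\epsilon}$. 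Since the sampler in this amplification is, in H{\aa}stad's argument, only guaranteed to exist by a probabilistic construction, the conclusion there is conditional on $\text{NP}\not\subseteq\text{ZPP}$. Zuckerman's final step replaces that object by an \emph{explicit}, polynomial-time-constructible disperser with near-optimal parameters; plugging it in makes the whole reduction deterministic and polynomial, upgrading the hypothesis to $\text{P}=\text{NP}$ and giving exactly the stated $n^{1-\epsilon}$-inapproximability of \textsc{Max Independent Set}. The main obstacle throughout is the tension between randomness efficiency (needed to keep $|V(H)|$ polynomial) and soundness amplification (needed to make the gap large); explicit pseudorandom objects are precisely what make it possible to have both simultaneously.
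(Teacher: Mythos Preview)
The paper does not prove this theorem at all: it is stated with a citation to \cite{Hastad99,Zuckerman07} and used purely as a black box for the subsequent gap-preserving reduction to \textsc{Max Min FVS}. There is therefore no ``paper's own proof'' to compare your sketch against; your write-up is a summary of the original H{\aa}stad/Zuckerman arguments rather than of anything in this paper.

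As a sketch of those original results, your outline is broadly accurate in spirit (PCP + FGLSS + randomness-efficient amplification via explicit dispersers), but note one slip: in the FGLSS graph for the \emph{independent set} formulation, edges join \emph{inconsistent} accepting views, not consistent ones; with edges on consistent pairs you would be describing the \textsc{Clique} version. Since the paper only needs the statement as a black box, none of this detail is required here.
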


Starting from this result, we present a reduction to \textsc{Max Min FVS}.

\begin{theorem}\label{thm:Inapprox} For any $\epsilon>0$, \textsc{Max Min 
FVS} is inapproximable within a factor of $n^{2/3-\epsilon}$ unless $\text{P}=\text{NP}$.
\end{theorem}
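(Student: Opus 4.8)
The plan is to reduce from \textsc{Max Independent Set}, leveraging the $n^{1-\epsilon}$ inapproximability of H\aa stad--Zuckerman, and amplify the ratio through a combination of padding and a gadget construction that forces an optimal minimal fvs to ``pay'' a vertex for every vertex \emph{not} in a maximum independent set. Given a graph $H$ on $N$ vertices, the natural first attempt is: for each vertex $x$ of $H$ introduce a large group of vertices, and for each edge introduce a gadget, so that a minimal fvs either selects a whole group (when $x$ is in the forest, i.e.\ in the independent set) or is forced to take only a constant/small number of the group's vertices (when $x$ is in the fvs). If a group has size $t$, then $\mmfvs$ of the constructed graph is roughly $t\cdot\alpha(H)$ up to lower-order terms, while the total number of vertices $n$ is roughly $N\cdot t$ plus the edge gadgets. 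Setting $t$ as a suitable polynomial in $N$ (the exponent chosen to equalize the two ways the gap can be expressed as a power of $n$) is what converts a factor-$N^{1-\epsilon}$ gap in $\alpha(H)$ into a factor-$n^{2/3-\epsilon'}$ gap in $\mmfvs$.

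Concretely, I would build the following. Start from an instance of \textsc{Independent Set} where it is NP-hard to distinguish $\alpha(H)\ge N^{1-\epsilon}$ from $\alpha(H)\le N^{\epsilon}$ (such a gap follows by standard self-improvement/amplification from H\aa stad--Zuckerman). For a parameter $m$ to be fixed, replace each vertex $x\in V(H)$ by an independent set $X_x$ of $m$ fresh ``choice'' vertices, and add a constant number of ``enforcer'' vertices per vertex and per edge so that (i) in any minimal fvs, for every $x$ either almost all of $X_x$ is in the fvs or almost all of $X_x$ is in the forest, (ii) if $X_x$ and $X_y$ are both (mostly) in the forest for an edge $xy\in E(H)$, then the corresponding edge gadget forces a cycle, so this is impossible for a valid fvs, and (iii) the ``forest side'' of $X_x$, together with the enforcers, can always be made into the forest of a \emph{minimal} fvs, i.e.\ no vertex of $X_x$ placed in the fvs is redundant (each such vertex needs a private cycle, which the enforcer gadget supplies). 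A clean way to achieve (i) and (iii) simultaneously is to attach to $X_x$ a small gadget (e.g.\ two apex-type vertices $a_x,b_x$ adjacent to all of $X_x$ plus a triangle-ish structure) which creates, for every single vertex of $X_x$, a private cycle through $a_x,b_x$; then a minimal fvs can take \emph{all but one or two} vertices of $X_x$. This yields $\mmfvs(G)\ge m\cdot\alpha(H) - O(N)$ on the ``yes'' side. On the ``no'' side, I must argue the converse: if $\alpha(H)$ is small, then any minimal fvs of $G$ has size $O(m\cdot\alpha(H)) + O(N\cdot m^{\delta})$ or so — the point being that the forest of any fvs, restricted to choice vertices, induces (essentially) an independent set in $H$, because an edge gadget between two ``forest'' groups is impossible; hence at most $\alpha(H)$ groups can be on the forest side, so at least $N-\alpha(H)$ groups contribute $\approx m$ vertices each to the fvs, but also at most $\alpha(H)$ groups contribute $\approx m$ to the \emph{complement}, bounding $|S|$ from above appropriately. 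Balancing: choose $m=N^{2}$ (roughly), so $n=\Theta(Nm)=\Theta(N^{3})$, $N=\Theta(n^{1/3})$; then the gap in $\mmfvs$ is $\Theta(m\cdot N^{1-\epsilon})/\Theta(m\cdot N^{\epsilon}) = \Theta(N^{1-2\epsilon}) = \Theta(n^{(1-2\epsilon)/3})$, which is not yet $n^{2/3}$ — so instead I would make the gap multiplicative in a stronger way, letting each edge gadget itself be large so that the ``no'' side optimum is dominated by $m\cdot\alpha(H)$ while $n$ is dominated by $m\cdot N$ on one side and by the total gadget size on the other; tuning these two polynomial sizes against each other (an $m$ roughly quadratic in $N$, combined with $\Theta(N^{1-\epsilon})$ vs $\Theta(N^{\epsilon})$ becoming $\Theta(N^{1-\epsilon})$ vs $\Theta(1)$ after absorbing the additive $O(N)$ term into the ratio) is what produces $n^{2/3-\epsilon'}$.

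The key steps, in order: (1) state the hard gap version of \textsc{Independent Set}; (2) describe the gadget replacing each vertex and each edge, and verify it is a simple graph; (3) prove the ``completeness'' direction: from an independent set $I$ in $H$ of size $\alpha$, construct a minimal fvs of $G$ of size at least $m\alpha$ minus a controlled additive term, checking minimality vertex-by-vertex via the private-cycle characterization stated in the preliminaries (every enforcer and every chosen choice-vertex has a private cycle); (4) prove the ``soundness'' direction: from any minimal fvs $S$ of $G$, extract an independent set of $H$ of size $\ge (|S| - \text{additive terms})/m$, using that no edge gadget can survive with both endpoint-groups on the forest side, and that the enforcer gadgets force the ``all-or-almost-nothing'' behavior of each $X_x$; (5) set the size parameters to get $n=\Theta(N^{3})$ and conclude that a polynomial-time $n^{2/3-\epsilon}$-approximation would distinguish the two cases, contradicting $\mathrm{P}=\mathrm{NP}$.

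The main obstacle I anticipate is the soundness direction, specifically controlling a minimal fvs that does \emph{not} respect the intended structure — e.g.\ one that splits some group $X_x$, keeping part in the forest and part in the fvs, or that uses enforcer vertices in unintended ways. I would handle this by designing the enforcer gadget so that any fvs is forced, up to an additive $O(1)$ per group, into one of the two intended configurations (this is where the $a_x,b_x$ apex vertices earn their keep: if even two vertices of $X_x$ are in the forest, then $a_x$ and $b_x$ must be in the fvs, and then the private cycles of the remaining choice-vertices of $X_x$ all go through $a_x,b_x$ and are destroyed, forcing almost all of $X_x$ into the forest and only $O(1)$ of it out). Getting the arithmetic of the additive lower-order terms to stay below $\epsilon$ in the exponent — i.e.\ making sure the gadget overhead is $o(n^{2/3-\epsilon})$ smaller than the main gap — is the routine-but-delicate bookkeeping that finishes the proof.
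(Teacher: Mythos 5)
Your proposal takes a genuinely different route from the paper's, and the difference is fatal: you attach a size-$m$ group of auxiliary vertices \emph{per vertex} of $H$, so that $\mmfvs$ of the constructed graph scales \emph{linearly} in $\alpha(H)$ (roughly $m\cdot\alpha(H)$). The paper instead attaches a size-$n$ independent set $I_{uv}$ to \emph{every pair} $u,v\in V(H)$ (no edge gadgets, no enforcers). A minimal fvs can harvest $|I_{uv}|-1$ vertices precisely when both $u,v$ lie in the induced forest, and the set of $H$-vertices in the forest induces a forest of $H$ and is therefore bipartite, hence of size at most $2\alpha(H)$. This yields $\mmfvs(G')=\Theta\bigl(n\cdot\alpha(H)^2\bigr)$ — a \emph{quadratic} dependence on $\alpha(H)$. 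That quadratic amplification is the entire source of the exponent $2/3$: an $r$-approximation for \textsc{Max Min FVS} gives an $O(\sqrt r)$-approximation for $\alpha(H)$, so Håstad--Zuckerman forces $r>N^{2-\epsilon}$, and since $|V(G')|=\Theta(N^3)$ this is $|V(G')|^{2/3-\epsilon'}$.

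You actually notice the arithmetic doesn't close under your linear scheme (``which is not yet $n^{2/3}$''), but the proposed repair — ``tuning $m$'' so the denominator $\Theta(N^\epsilon)$ is ``absorbed'' into an additive $O(N)$ term — does not work. With $\mmfvs=\Theta(m\alpha)$ and $n\ge Nm$, the gap is at most $\alpha_{\mathrm{yes}}/\alpha_{\mathrm{no}}\le N$, while you need a gap of $(Nm)^{2/3}\ge N^{2/3}m^{2/3}$; this forces $m\le N^{1/2}$, which in turn makes the $\Omega(N)$ (or $\Omega(N^2)$, once you count one enforcer per edge of a possibly dense $H$) additive overhead dominate $m\cdot\alpha_{\mathrm{no}}$. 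No choice of $m$ escapes this, so a per-vertex construction caps out around $n^{1/3}$, not $n^{2/3}$. Separately, your construction is also more complicated than needed: the paper's $I_{uv}$ gadget is already an independent set joined to $u$ and $v$, so minimality automatically gives the ``all-or-nothing'' behavior you are designing apex/enforcer gadgets to enforce. The missing idea is to make the auxiliary structure live on \emph{pairs} of $H$-vertices so that the objective becomes quadratic in $\alpha(H)$.
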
 

\begin{proof} We give a gap-preserving reduction from \textsc{Max Independent
Set}, which cannot be approximated within a factor of $n^{1-\epsilon}$, unless
$\text{P}=\text{NP}$.  We are given a graph $G=(V,E)$ on $n$ vertices as an instance of
\textsc{Max Independent Set}. Recall that $\alpha(G)$ denotes the size of the
maximum independent set of $G$.

We transform $G$ into an instance of \textsc{Max Min FVS} as follows: For
every pair of $u,v\in V$, we add $n$ vertices such that they are adjacent only
to $u$ and $v$.  We denote by $I_{uv}$ the set of such vertices.  Then $I_{uv}$
is an independent set. Let $G'=(V',E')$ be the constructed graph. 

We now make the following two claims:

\begin{claim} $\mmfvs(G')\ge (n-1){\alpha(G)\choose 2}$ \end{claim} 

\begin{proof} We construct a minimal fvs of $G'$ as follows: let $C$ be a
minimum vertex cover of $G$. Then we begin with the set that contains $C$ and
the union of all $I_{uv}$ (which is clearly an fvs) and remove vertices from it
until it becomes minimal. Let $S$ be the final minimal fvs. We observe that for
all $u,v\in V\setminus C$, $S$ contains at least $n-1$ of the vertices of
$I_{uv}$. Since $C$ is a minimum vertex cover of $G$, there are
$\alpha(G)\choose 2$ pairs $u,v\in V\setminus C$. \end{proof}

\begin{claim} $\mmfvs(G')\le n{2\alpha(G)\choose 2}+n$ \end{claim}

\begin{proof} 

Let $S$ be a minimal fvs of $G'$ and $F$ be the corresponding forest. It
suffices to show that $|S\setminus V|\le n{2\alpha(G)\choose 2}$, since $|S\cap
V|\le n$. Consider now a set $I_{uv}$. If $u\in S$ or $v\in S$, then
$I_{uv}\cap S=\emptyset$, because all vertices of $I_{uv}$ have at most one
neighbor in $F$, and are therefore redundant. So, $I_{uv}$ contains (at most
$n$) vertices of $S$ only if $u,v\in F$. However, $|F\cap V| \le 2 \alpha(G)$,
because $F$ is bipartite, so $F\cap V$ induces two independent sets, both of
which must be at most equal to the maximum independent set of $G$. So the
number of pairs $u,v\in F\cap V$ is at most $2\alpha(G)\choose 2$ and since
each corresponding $I_{uv}$ has size $n$, we get the promised bound.
\end{proof}

The two claims together imply that there exist constants $c_1,c_2$ such that
(for sufficiently large $n$) we have $c_1n(\alpha(G))^2 \le \mmfvs(G') \le
c_2n(\alpha(G))^2$. That is, $\mmfvs(G') = \Theta(n(\alpha(G))^2)$.

Suppose now that there exists a polynomial-time approximation algorithm which,
given a graph $G'$, produces a minimal fvs $S$ with the property
$\frac{\mmfvs(G')}{r}\le|S|\le \mmfvs(G')$, that is, there exists an
$r$-approximation for \textsc{Max Min FVS}. Running this algorithm on the
instance we constructed, we obtain that $\frac{c_1n(\alpha(G))^2}{r} \le |S|
\le c_2n(\alpha(G))^2$. Therefore,   $\frac{\alpha(G)}{\sqrt{rc_2/c_1}} \le
\sqrt{\frac{|S|}{c_2n}} \le \alpha(G)$. As a result, we obtain an $O(\sqrt{r})$
approximation for the value of $\alpha(G)$. We therefore conclude that, unless
$\text{P}=\text{NP}$, any such algorithm must have $\sqrt{r}>n^{1-\epsilon}$, for any
$\epsilon>0$, hence, $r>n^{2-\epsilon}$, for any $\epsilon>0$. Since the graph
$G'$ has $N=\Theta(n^3)$ vertices, we get that no approximation algorithm can
achieve a ratio of $N^{2/3-\epsilon}$. \end{proof}

We notice that in the construction of the previous theorem, the maximum degree
of the graph is approximately equal to the approximation gap.  Thus, the
following corollary also holds.

\begin{corollary}\label{cor:inapprox_deg} For any positive constant $\epsilon$,
\textsc{Max Min FVS} is inapproximable within a factor of
$\Delta^{1-\epsilon}$ unless $\text{P}=\text{NP}$.  \end{corollary}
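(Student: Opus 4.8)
The plan is to reuse the gap-preserving reduction from \textsc{Max Independent Set} given in the proof of Theorem~\ref{thm:Inapprox} and simply track the maximum degree of the constructed graph $G'$ instead of its order. Recall that in that construction we start from an $n$-vertex graph $G$ and, for every pair $u,v\in V$, we add a set $I_{uv}$ of $n$ fresh vertices adjacent only to $u$ and $v$. The key observation is that each vertex $w\in V$ now has degree at most $(n-1)\cdot n$ in $G'$: it keeps its at most $n-1$ neighbors in $G$, and for each of the $n-1$ pairs $\{w,v\}$ it gains $n$ new neighbors in $I_{wv}$. Meanwhile the vertices inside each $I_{uv}$ have degree exactly $2$. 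Hence $\Delta(G')=O(n^2)$, which up to a polynomial is the same as the order $N=\Theta(n^3)$ of $G'$ raised to the power $2/3$; more directly, $n=\Theta(\Delta(G')^{1/2})$.

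The remaining argument is then essentially identical to the final paragraph of the proof of Theorem~\ref{thm:Inapprox}, with $\Delta(G')$ playing the role previously played by $N$. First I would recall the sandwich bound $\mmfvs(G')=\Theta(n(\alpha(G))^2)$ established by the two Claims. Then, given a purported polynomial-time $r$-approximation for \textsc{Max Min FVS}, running it on $G'$ yields, exactly as before, an $O(\sqrt{r})$-approximation for $\alpha(G)$. By H{\aa}stad's theorem (as derandomized by Zuckerman), unless $\text{P}=\text{NP}$ we must have $\sqrt{r}>n^{1-\epsilon}$ for every $\epsilon>0$, hence $r>n^{2-\epsilon}$. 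Substituting $n=\Theta(\Delta(G')^{1/2})$ gives $r>\Delta(G')^{1-\epsilon'}$ for every $\epsilon'>0$ (after absorbing constants into the exponent), which is the claimed inapproximability within a factor $\Delta^{1-\epsilon}$.

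I do not expect a genuine obstacle here, since this is really a ``free'' corollary: no new reduction is needed, only a degree audit of the graph already built. The one point that requires a small amount of care is to confirm that the maximum degree is indeed governed by the original vertices of $V$ (degree $\Theta(n^2)$) rather than by anything larger, and to make sure the polynomial relationship $\Delta(G')=\Theta(n^2)$ is translated correctly into the exponent so that the $\Delta^{1-\epsilon}$ bound comes out with an arbitrarily small loss in the exponent. Both are routine: the vertices of $I_{uv}$ contribute only degree $2$, and the standard argument that an $n^{2-\epsilon}$ lower bound in terms of $n$ becomes a $\Delta^{1-\epsilon}$ lower bound in terms of $\Delta=\Theta(n^2)$ loses only a constant factor in the exponent, which is absorbed by quantifying over all $\epsilon>0$.
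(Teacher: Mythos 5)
Your proposal is correct and matches the paper's (very terse) argument: the paper merely observes in one sentence that $\Delta(G')$ is of the same order as the inapproximability gap, and your degree audit ($\Delta(G')=\Theta(n^2)$, with the $I_{uv}$ vertices having degree only $2$) followed by the substitution $n=\Theta(\Delta^{1/2})$ into the $r>n^{2-\epsilon}$ bound is exactly the fleshed-out version of that remark.
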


\subsection{Hardness of Approximation in Sub-Exponential Time}

In this section we extend Theorem \ref{thm:Inapprox} to the realm of
sub-exponential time algorithms. We recall the following result of Chalermsook
et al.

\begin{theorem}\label{thm:chal}\cite{ChalermsookLN13} For any $\epsilon>0$
and any sufficiently large $r$, if there exists an $r$-approximation algorithm
for \textsc{Max Independent Set} running in $2^{(n/r)^{1-\epsilon}}$, then the
randomized ETH is false.  \end{theorem}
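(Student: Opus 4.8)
Since the statement is quoted verbatim from \cite{ChalermsookLN13}, the argument I would give is a reconstruction of the standard pipeline behind it: chain the (randomized) ETH with the PCP theorem and a \emph{size‑efficient} gap‑amplification step, tracking the vertex count carefully so that the exponent comes out as $(n/r)^{1-\epsilon}$ and not something weaker.

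First I would invoke the Sparsification Lemma, so that the ETH already rules out $2^{o(m)}$‑time algorithms for \textsc{3-SAT} instances $\phi$ with $m$ variables and $O(m)$ clauses. Then I would apply a PCP theorem with \emph{near‑linear} proof length (the quasi‑linear PCPs of Dinur, or of Ben‑Sasson and Sudan) to transform $\phi$, in polynomial time, into a \textsc{Max Independent Set} instance $G_0$ on $N_0 = m^{1+o(1)}$ vertices exhibiting a constant gap: $\alpha(G_0)\ge a$ in the \emph{yes} case and $\alpha(G_0)\le a/g$ in the \emph{no} case, for some absolute constant $g>1$. Near‑linearity is the point: an algorithm running in $2^{N_0^{1-\epsilon}}$ on $G_0$ then runs in $2^{o(m)}$ on $\phi$, which would contradict the ETH.

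The heart of the matter is amplifying the gap from the constant $g$ up to the prescribed $r$ while keeping the number of vertices almost linear in $r$. A graph product boosts the gap — for instance the lexicographic product, for which $\alpha(G[H]) = \alpha(G)\alpha(H)$ exactly and $|V(G[H])| = |V(G)|\,|V(H)|$ — but taking the $\lceil \log_g r\rceil$‑th power naively costs $N_0^{\Theta(\log r)}$ vertices and only yields a weak lower bound of the form $2^{N^{\Omega(1/\log r)}}$. The refinement in \cite{ChalermsookLN13} is to carry out the amplification on a \emph{randomly sub‑sampled} product (in the spirit of Berman–Schnitger randomized graph products), so that the gap becomes $r$ while the vertex count stays $N_0\cdot r^{1+o(1)}$, and then to derandomize the sampling — affordable precisely because the running time we are ruling out is super‑polynomial, analogous to Zuckerman's derandomization \cite{Zuckerman07} of H{\aa}stad's argument \cite{Hastad99}. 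The outcome is a gap‑$r$ instance on $N = N_0\,r^{1+o(1)}$ vertices whose ``hard core'' — the part that actually encodes $\phi$ — still has size $N_0 = (N/r)^{1-o(1)}$.

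To conclude, an $r$‑approximation for \textsc{Max Independent Set} distinguishes the \emph{yes} and \emph{no} cases of this instance, so such an algorithm running in $2^{(N/r)^{1-\epsilon}}$ would decide $\phi$ in time $2^{N_0^{1-\epsilon/2}} = 2^{o(m)}$, contradicting the randomized ETH. The step I expect to be the real obstacle is the economical gap amplification: forcing the vertex blow‑up to be $r^{1+o(1)}$ (rather than polynomial with a large exponent, let alone quasi‑polynomial in $r$) and making the construction deterministic. Once a near‑linear PCP and this amplification are in hand, the rest is routine bookkeeping relating $m$, $N_0$, $N$, $r$ and $\epsilon$.
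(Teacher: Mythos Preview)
The paper does not prove this theorem at all: it is quoted from \cite{ChalermsookLN13} and used as a black box. The only detail the paper recalls (inside the proof of Theorem~\ref{thm:Inapprox2}) is the shape of the output: from a \textsc{3-SAT} instance on $n$ variables the reduction produces, \emph{with high probability}, a graph on $n^{1+\epsilon}r^{1+\epsilon}$ vertices with $\alpha(G)\ge n^{1+\epsilon}r$ in the satisfiable case and $\alpha(G)\le n^{1+\epsilon}r^{2\epsilon}$ otherwise. So there is no ``paper's own proof'' to compare against.

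Your reconstruction of the pipeline (Sparsification Lemma $\to$ near-linear PCP $\to$ randomized graph products to push the gap up to $r$ at vertex cost $r^{1+o(1)}$) is the right outline and matches the parameters the paper quotes. One correction, though: you claim the sampling step is \emph{derandomized}. It is not, and this is exactly why the conclusion of the theorem is that the \emph{randomized} ETH fails rather than the ordinary ETH. The reduction of \cite{ChalermsookLN13} is randomized and succeeds with high probability (the paper explicitly says so when it reuses the construction), so a fast $r$-approximation yields a \emph{randomized} sub-exponential algorithm for \textsc{3-SAT}. If you keep the derandomization remark, the hypothesis you are refuting no longer matches the statement.
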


We remark that Theorem \ref{thm:chal}, which gives an almost tight running time
lower bound for \textsc{Max Independent Set}, has already been used as a
starting point to derive a similarly tight bound for the running time of any
sub-exponential time approximation for \textsc{Max Min VC}. Here,
we modify the proof of Theorem \ref{thm:Inapprox} to obtain a similarly tight
result for \textsc{Max Min FVS}. Nevertheless, the reduction for
\textsc{Max Min FVS} is significantly more challenging, because the ideas
used in Theorem \ref{thm:Inapprox} involve an inherent quadratic (in $n$)
blow-up of the size of the instance. As a result, in addition to executing an
appropriately modified version of the reduction of Theorem \ref{thm:Inapprox},
we are forced to add an extra ``sparsification'' step, and use a probabilistic
analysis with Chernoff bounds to argue that this step does not destroy the
inapproximability gap.

\begin{theorem}\label{thm:Inapprox2} For any $\epsilon>0$ and any sufficiently
large $r$, if there exists an $r$-approximation algorithm for \textsc{Max
Min FVS} running in $2^{(n/r^{3/2})^{1-\epsilon}}$, then the randomized ETH
is false.  \end{theorem}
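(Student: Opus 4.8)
The plan is to modify the polynomial-time reduction of Theorem~\ref{thm:Inapprox} so that the size blow-up is controlled, and then sparsify. Recall that in Theorem~\ref{thm:Inapprox} we started from a \textsc{Max Independent Set} instance $G$ on $n$ vertices and, for every pair $u,v \in V$, added a set $I_{uv}$ of $n$ vertices adjacent only to $u$ and $v$; this gave $N = \Theta(n^3)$ vertices and $\mmfvs(G') = \Theta(n\cdot\alpha(G)^2)$, from which an $r$-approximation for \textsc{Max Min FVS} yields an $O(\sqrt{r})$-approximation for \textsc{Max Independent Set}. To make this work in sub-exponential time I would first invoke Theorem~\ref{thm:chal}: it suffices to produce, from an instance $G$ on $n$ vertices, an instance $G'$ on $N$ vertices such that (i) a $\rho$-approximation for \textsc{Max Min FVS} on $G'$ gives an $O(\sqrt{\rho})$-approximation for \textsc{Max Independent Set} on $G$, and (ii) $N$ is small enough that a running time of $2^{(N/\rho^{3/2})^{1-\epsilon}}$ on $G'$ translates to $2^{(n/\rho'^{\,})^{1-\epsilon'}}$-type time for \textsc{Max Independent Set} with $\rho' = \Theta(\sqrt{\rho})$, contradicting the randomized ETH. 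Matching the exponents: we want $N/\rho^{3/2} = O((n/\rho')^{1+o(1)})$ with $\rho' = \Theta(\sqrt\rho)$, i.e. $N = O(\rho^{3/2}\cdot(n/\sqrt\rho)^{1+o(1)}) = O(n^{1+o(1)}\rho^{1+o(1)})$; so the gadget size must be roughly linear in $n$, not cubic.

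The key new ingredient is therefore to replace the ``$n$ vertices per pair'' construction by a sparsified version. First, I would parametrize: instead of attaching a copy $I_{uv}$ of size $n$ for \emph{every} pair, sample a random set of pairs, keeping each pair independently with probability $p$ (to be tuned), and for the kept pairs attach a set $I_{uv}$ of size $\ell$ (also to be tuned, likely $\ell = \Theta(1)$ or a small polynomial, possibly $\ell$ tied to $\rho$). This brings the number of added vertices down to roughly $p\ell n^2$; choosing $p = \Theta(1/n)$ already yields $N = \Theta(\ell n)$, linear in $n$ up to the $\ell$ factor. Before sparsifying it helps to first amplify $\alpha(G)$ by a polynomial power (the standard graph-product / disjoint-copies trick) so that $\alpha(G)$ is polynomially large and the relative Chernoff error is negligible; this is exactly the kind of ``boosting'' step that lets one afford a sparsification that only holds in expectation. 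The analysis of the lower bound (Claim analogous to ``$\mmfvs(G') \ge (n-1)\binom{\alpha(G)}{2}$'') goes through with high probability after sampling: if $A$ is a maximum independent set of $G$, the number of kept pairs inside $A$ concentrates around $p\binom{|A|}{2}$ by a Chernoff bound, so we get $\mmfvs(G') \ge (\ell-1)\cdot(1-o(1))p\binom{\alpha(G)}{2}$ with high probability. The upper bound (analogous to ``$\mmfvs(G')\le n\binom{2\alpha(G)}{2}+n$'') is the more delicate direction: here we cannot take a union bound over all minimal fvs $S$, so instead I would bound $|S\setminus V|$ by observing that a set $I_{uv}$ contributes to $S$ only if both $u,v$ lie in the induced forest $F$, hence in one of two independent sets of size $\le \alpha(G)$; the number of such pairs that were actually kept by the sampling concentrates (over all $\binom{2\alpha(G)}{2}$ candidate pairs simultaneously, which is only a polynomial number of pairs since $\alpha$ is polynomial after boosting) around $p\binom{2\alpha(G)}{2}$, so $|S\setminus V| \le \ell\cdot(1+o(1))p\binom{2\alpha(G)}{2}$ with high probability, uniformly over all choices of the forest intersection with $V$. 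Thus $\mmfvs(G') = \Theta(p\ell\,\alpha(G)^2)$ with high probability, and the same computation as in Theorem~\ref{thm:Inapprox} shows a $\rho$-approximation gives $O(\sqrt\rho)$ for independent set.

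Putting it together: with $p = \Theta(1/n)$, $\ell$ chosen so that $N = \Theta(\ell n)$ stays within the required $n^{1+o(1)}\rho^{1+o(1)}$ budget (a small polynomial in $\rho$, or even constant, suffices — one tunes $\ell$ so the final exponent bookkeeping closes with an arbitrarily small loss in the second exponent, absorbed into $\epsilon$), an $r$-approximation for \textsc{Max Min FVS} on $G'$ in time $2^{(N/r^{3/2})^{1-\epsilon}}$ yields an $O(\sqrt r)$-approximation for \textsc{Max Independent Set} in time $2^{(n/(\sqrt r)^{1})^{1-\epsilon'}}$ for some $\epsilon'>0$ depending on $\epsilon$ and the tuning, contradicting Theorem~\ref{thm:chal} and hence the randomized ETH. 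The main obstacle I anticipate is the sparsification analysis for the \emph{upper} bound: one must ensure the Chernoff concentration holds simultaneously for \emph{every} possible pair $\{u,v\}$ of vertices that could end up together in $F\cap V$, and that the error terms are small enough relative to $\alpha(G)^2$ — this is precisely why the preliminary polynomial boosting of $\alpha(G)$ is needed, so that there are only polynomially many relevant pairs and the failure probability, after a union bound, is still $o(1)$. Care is also needed that the boosting step and the sparsification interact correctly with the parameter $r$ (which may depend on $n$) so the final running-time exponent is genuinely $(N/r^{3/2})^{1-\epsilon}$ and not something weaker.
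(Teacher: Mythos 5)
Your overall plan (sparsify the quadratic gadget to bring the blow-up down to near-linear, then use Chernoff bounds) is the same strategy as the paper's, so you have the right idea. But there is a genuine gap in the upper-bound concentration argument, precisely at the place you flag as ``the main obstacle.''

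The union bound you propose is over the wrong object. You write that the count of kept pairs ``concentrates over all $\binom{2\alpha(G)}{2}$ candidate pairs simultaneously, which is only a polynomial number of pairs,'' and you invoke boosting so that $\alpha(G)$ is polynomially large to make this count polynomial. This misidentifies what must be union-bounded. After the random sparsification, an adversary chooses a minimal fvs $S$ of $G''$, and hence a forest $F$, \emph{after} seeing the outcome of the coins. The bound on $|S\setminus V|$ depends on which acyclic set $F\cap V(G)$ the adversary picked, so you need the Chernoff-type bound to hold simultaneously for \emph{every} possible such set — and there are exponentially many of them (up to $2^{|V(G)|}$, or at least $\binom{|V(G)|}{2\alpha(G)}$ if you restrict to sets of the right size). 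Making $\alpha(G)$ polynomially large in no way reduces this count; if anything it increases it. The paper's proof explicitly performs this exponential union bound: for a fixed $F\subseteq V(G)$ it shows the failure probability is $o(2^{-|V(G)|})$ by ensuring the relevant mean $\mu$ satisfies $|V(G)| = o(\mu)$, and then unions over all $F$. Without this, the high-probability guarantee for the NO case does not follow.

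Two secondary issues. First, the boosting step is not only unnecessary (the Chalermsook et al.\ construction already produces $\alpha(G)$ polynomially large, even in the NO case) but also distracting, since it does nothing to fix the union-bound problem above, and you would additionally have to check it doesn't disturb the gap or the vertex count in a way that breaks the final exponent arithmetic. Second, your parameterization of the sparsification may not give strong enough concentration. With pair-level sampling at rate $p = \Theta(1/n)$ and $\ell$ vertices attached to each surviving pair, the relevant Chernoff parameter is $\mu_Y = p\binom{|F\cap V|}{2}$, and multiplying by $\ell$ afterward does not improve concentration. To make $\mu_Y$ large enough to beat the exponential union bound for all regimes of $r$, you must push the randomness into more independent $0/1$ indicators; the paper does this by using gadgets of size $\sqrt{r}$ (not constant $\ell$) and sparsifying at the \emph{vertex} level with keep-probability $1/n$, so that the mean $\mu = \binom{2\alpha}{2}\sqrt{r}/n$ picks up an extra $\sqrt r$ factor and always dominates $|V(G)|$. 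Your framing allows $\ell$ to be tuned, but you would need to recognize that $\ell$ large combined with $p=\Theta(1/n)$ (as opposed to $\ell=1$, $p=\Theta(\sqrt r/n)$) is exactly the wrong tradeoff — and even then, the union-bound target must be corrected. Finally, the paper unrolls the Chalermsook et al.\ reduction from 3-SAT rather than invoking Theorem~\ref{thm:chal} as a black box; the black-box route is in principle salvageable, but you must be careful that the randomization of the sparsification is compatible with the ``randomized ETH'' conclusion, a point you gesture at but do not nail down.
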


\begin{proof} We recall some details about the
reduction used to prove Theorem \ref{thm:chal}. The reduction of
\cite{ChalermsookLN13} begins from a \textsc{3-SAT} instance $\phi$ on $n$
variables, and for any $\epsilon,r$, constructs a graph $G$ with
$n^{1+\epsilon}r^{1+\epsilon}$ vertices which (with high probability) satisfies
the following properties: if $\phi$ is satisfiable, then $\alpha(G)\ge
n^{1+\epsilon}r$; otherwise $\alpha(G)\le n^{1+\epsilon}r^{2\epsilon}$. Hence,
any approximation algorithm with ratio $r^{1-2\epsilon}$ for \textsc{Max
Independent Set} would be able to distinguish between the two cases (and solve
the initial \textsc{3-SAT} instance). If, furthermore, this algorithm runs in
$2^{(|V|/r)^{1-2\epsilon}}$, we get a sub-exponential algorithm for
\textsc{3-SAT}.

Suppose we are given $\epsilon,r$, and we want to prove the claimed lower bound
on the running time of any algorithm that $r$-approximates \textsc{Max Min 
FVS}. To ease presentation, we will assume that $r$ is the square of an integer
(this can be achieved without changing the value of $r$ by more than a small
constant).  We will also perform a reduction from \textsc{3-SAT} to show that
an algorithm that achieves this ratio too rapidly would give a sub-exponential
(randomized) algorithm for \textsc{3-SAT}.  We begin by executing the reduction
of \cite{ChalermsookLN13}, starting from a \textsc{3-SAT} instance $\phi$ on
$n$ variables, but adjusting their parameter $r$ appropriately so we obtain a
graph $G$ with the following properties (with high probability):

\begin{itemize}

\item $|V(G)| = n^{1+\epsilon}r^{1/2+\epsilon}$

\item If $\phi$ is satisfiable, then $\alpha(G)\ge n^{1+\epsilon}r^{1/2}$

\item If $\phi$ is not satisfiable, then $\alpha(G)\le
n^{1+\epsilon}r^{2\epsilon}$  

\end{itemize}

We now construct a graph $G'$ as follows: for each pair $u,v\in V(G)$, we
introduce an independent set $I_{uv}$ of size $\sqrt{r}$ connected to $u,v$. We
claim that $G'$ has the following properties (assuming $G$ has the properties
cited above):

\begin{itemize}

\item $|V(G')| = \Theta(n^{2+2\epsilon}r^{3/2+2\epsilon})$

\item If $\phi$ is satisfiable, then $\mmfvs(G') = \Omega(
n^{2+2\epsilon}r^{3/2})$

\item If $\phi$ is not satisfiable, then $\mmfvs(G') = O(
n^{2+2\epsilon}r^{1/2+4\epsilon})$

\end{itemize}

Before proceeding, let us establish the properties mentioned above. The size of
$|V(G')|$ is easy to bound, as for each of the $|V(G)|\choose 2$ pairs of
vertices of $G$ we have constructed an independent set of size $\sqrt{r}$. If
$\phi$ is satisfiable, we construct a minimal fvs of $G'$ by starting with a
minimum vertex cover of $G$ to which we add all vertices of all $I_{uv}$. We
then make this fvs minimal. We claim that for each $I_{uv}$ for which $u,v\in
V\setminus C$, our set will in the end contain all of $I_{uv}$, except maybe at
most one vertex. Furthermore, if one vertex of $I_{uv}$ is removed from the fvs
as redundant, this decreases the number of components of the induced forest
that contain vertices of $V$ (as $u,v$ are now in the same component). This
cannot happen more than $|V(G)|$ times. The number of $I_{uv}$ with $u,v\in
V\setminus C$ is ${\alpha(G)\choose 2} =\Omega(n^{2+2\epsilon}r)$. So,
$\mmfvs(G') = \Omega(n^{2+2\epsilon}r^{3/2} - |V(G)|)$. 

For the third property, take any minimal fvs $S$ of $G'$ and let $F$ be the
corresponding forest. We have $|F\cap V|\le 2\alpha(G)$, because $F$ is
bipartite. It is sufficient to bound $|S\setminus V|$ to obtain the bound (as
$|S\cap V|$ is already small enough). To do this, we note that in a set
$I_{uv}$ where $u,v$ are not both in $F$, we have $I_{uv}\cap S=\emptyset$, as
all vertices of $I_{uv}$ are redundant. So, the number of sets $I_{uv}$ which
contribute vertices to $S$ is at most ${|F\cap V|\choose 2} =
O(n^{2+2\epsilon}r^{4\epsilon})$. Each such set has size $\sqrt{r}$, giving the
claimed bound.

We have now constructed an instance where the gap between the values for
$\mmfvs(G')$, depending on whether $\phi$ is satisfiable, is almost $r$ (in
fact, it is $r^{1-4\epsilon}$, but we can make it equal to $r$ by adjusting the
parameters accordingly). The problem is that the order of the new graph depends
quadratically on $n$. This blow-up makes it impossible to obtain a running time
lower bound, as a fast approximation algorithm for \textsc{Max Min FVS}
(say with running time $2^{n/r^2}$) would not result in a sub-exponential
algorithm for \textsc{3-SAT}. We therefore need to ``sparsify'' our instance.

We construct a graph $G''$ by taking $G'$ and deleting every vertex of
$V(G')\setminus V(G)$ with probability $\frac{n-1}{n}$. That is, every vertex
of the independent sets $I_{uv}$ we added survives (independently) with
probability $1/n$. We now claim the following properties hold with high
probability:

\begin{itemize}

\item $|V(G'')| = \Theta(n^{1+2\epsilon}r^{3/2+2\epsilon})$

\item If $\phi$ is satisfiable, then $\mmfvs(G'') =
\Omega(n^{1+2\epsilon}r^{3/2})$

\item If $\phi$ is not satisfiable, then $\mmfvs(G'') =
O(n^{1+2\epsilon}r^{1/2+4\epsilon})$

\end{itemize}

Before we proceed, let us explain why if we establish that $G''$ satisfies
these properties, then we obtain the theorem. Indeed, suppose that for some
sufficiently large $r$ and $\epsilon>0$, there exists an approximation
algorithm for \textsc{Max Min FVS} with ratio $r^{1-5\epsilon}$ running in
time $2^{(N/r^{3/2})^{1-10\epsilon}}$ for graphs with $N$ vertices. The
algorithm has sufficiently small ratio to distinguish between the two cases in
our constructed graph $G''$, as the ratio between $\mmfvs(G'')$ when $\phi$ is
satisfiable or not is $\Omega(r^{1-4\epsilon})$ (and $r$ is sufficiently
large), so we can use the approximation algorithm to solve \textsc{3-SAT}.
Furthermore, to compute the running time we see that $N/r^{3/2} =
\Theta(n^{1+2\epsilon}r^{2\epsilon}) = O(n^{1+4\epsilon})$.  Therefore,
$(N/r^{3/2})^{1-10\epsilon} = o(n)$ and we get a sub-exponential time algorithm
for \textsc{3-SAT}. We conclude that for any sufficiently large $r$ and any
$\epsilon>0$, no algorithm achieves ratio $r^{1-5\epsilon}$ in time
$2^{(N/r^{3/2})^{1-10\epsilon}}$. By adjusting $r,\epsilon$ appropriately we
get the statement of the theorem.

Let us therefore try to establish that the three claimed properties all hold
with high probability. We will use the following standard Chernoff bound:
suppose $X=\sum_{i=1}^n X_i$ is the sum of $n$ independent random $0/1$
variables $X_i$ and that $E[X] = \sum_{i=1}^n E[X_i] = \mu$. Then, for all
$\delta\in (0,1)$ we have $Pr[|X-\mu| \ge \delta \mu] \le 2e^{-\mu\delta^2/3}$

The first property is easy to establish: we define a random variable $X_i$ for
each vertex of each $I_{uv}$ of $G'$. This variable takes value $1$ if the
corresponding vertex appears in $G''$ and $0$ otherwise. Let $X$ be the sum of
the $X_i$ variables, which corresponds to the number of such vertices appearing
in $G''$. Suppose that the number of vertices in sets $I_{uv}$ of $G'$ is
$cn^{2+2\epsilon}r^{3/2+2\epsilon}$, where $c$ is a constant. Then,
$E[X]=cn^{1+2\epsilon}r^{3/2+2\epsilon}$. Also, $Pr[|X-E[X]| \ge
\frac{E[X]}{2}] \le 2e^{-E[X]/12} = o(1)$. So with high probability, $|V(G'')|$
is of the promised magnitude.

The second property is also straightforward. This time we consider a maximum
minimal fvs $S$ of $G'$ of size $cn^{2+2\epsilon}r^{3/2}$. Again, we define an
indicator variable for each vertex of this set in sets $I_{uv}$. The expected
number of such vertices that survive in $G''$ is $cn^{1+2\epsilon}r^{3/2}$. As
in the previous paragraph, with high probability the actual number will be
close to this bound. We now need to argue that (almost) the same set is a
minimal fvs of $G''$. We start in $G''$ with (the surviving vertices of) $S$,
which is clearly an fvs of $G''$, and delete vertices until the set is minimal.
We claim that the size of the set will decrease by at most
$|V(G)|=n^{1+\epsilon}r^{1+\epsilon}$.  Indeed, if $S\cap I_{uv}\neq
\emptyset$, then $u,v\not\in S$. The two vertices $u,v$ are (deterministically)
included in $G''$ and start out in the corresponding induced forest in our
solution. If a vertex of $S\cap I_{uv}$ is deleted as redundant, placing that
vertex in the forest will put $u,v$ in the same component, reducing the number
of components of the forest with vertices from $|V(G)|$. This can happen at
most $|V(G)|$ times.  Since $|V(G)|<\frac{c}{10}(n^{1+2\epsilon}r^{3/2})$ (for
$n,r$ sufficiently large), deleting these redundant vertices will not change
the order of magnitude of the solution.

Finally, in order to establish the third property we need to consider every
possible minimal fvs of $G''$ and show that none of them end up being too
large. Consider a set $F\subseteq V(G)$ that induces a forest in $G$. Our goal
is to prove that any minimal fvs $S$ of $G''$ that satisfies $ V(G)\setminus
S=F$ has a probability of being ``too large'' (that is, violating our claimed
bound) much smaller than $2^{-|V(G)|}$. If we achieve this, then we can take a
union bound over all sets $F$ and conclude that with high probability no
minimal fvs of $G''$ is too large.

Suppose then that we have fixed an acyclic set $F\subseteq V(G)$. We have
$|F|\le 2\alpha(G) \le 2n^{1+\epsilon}r^{2\epsilon}$. Any minimal fvs with
$V(G)\setminus S= F$ can only contain vertices from a set $I_{uv}$ if $u,v\in
F$. The total number of such vertices in $G'$ is at most
$O(n^{2+2\epsilon}r^{1/2+4\epsilon})$. The expected number of such vertices
that survive in $G''$ is (for some constant $c$) at most
$\mu=cn^{1+2\epsilon}r^{1/2+4\epsilon}$. Now, using the Chernoff bound cited
above we have $Pr[|X-\mu|\ge\frac{\mu}{2}] \le 2e^{-\mu/12}$. We claim
$2e^{-\mu/12} = o(2^{-|V(G)|})$. Indeed, this follows because $|V(G)| =
n^{1+\epsilon}r^{1/2+\epsilon} = o(\mu)$. As a result, the probability that a
large minimal fvs exists for a fixed set $F\subseteq V(G)$ exists is low enough
that taking the union bound over all possible sets $F$ we have that with high
probability no minimal fvs exists with value higher than $3\mu/2$, which
establishes the third property. \end{proof}

\subsection{NP-hardness for $\Delta=6$}\label{sec:NP}

\begin{theorem}\label{thm:NP-h}
\textsc{Max Min FVS} is NP-hard on planar bipartite graphs with $\Delta=6$.
\end{theorem}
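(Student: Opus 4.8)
The plan is to reduce from a known NP-hard problem on bounded-degree planar structures and encode the "large minimal fvs" objective via a gadget construction. A natural starting point is \textsc{Max Min Vertex Cover} (or plain \textsc{Vertex Cover}) on planar graphs of small maximum degree, or alternatively \textsc{Independent Set} on cubic planar graphs, both of which are NP-hard. The high-level idea is that a minimal feedback vertex set can be forced to behave like a vertex cover by attaching, to each edge $uv$ of the base graph, a small cycle gadget (for example a short cycle through $u$, $v$ and one or two fresh vertices) so that the only way to hit all these cycles while keeping minimality is to select one endpoint of each edge; the number of fresh gadget vertices we can additionally include in the fvs will then be controlled by how many base vertices we left out (i.e. by the complement of the vertex cover), so that maximizing the minimal fvs corresponds to minimizing the vertex cover, equivalently maximizing the independent set of the base graph.

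Concretely, first I would fix the source problem and its degree/planarity guarantees, then describe the gadget: for each edge $uv$ of the planar cubic (or degree-$3$) base graph $G$, introduce a constant number of new vertices forming a cycle with $u$ and $v$, arranged so that the resulting graph $G'$ is still planar and bipartite and has $\Delta=6$ — the degree budget of $6$ is what dictates how many gadget attachments a base vertex can tolerate, which is why the base degree must be around $3$ and why the gadget must be lean. Next I would prove the two directions: (i) from an independent set $S$ of $G$ of size $\alpha$, build a minimal fvs of $G'$ by taking $V(G)\setminus S$ together with a canonical large subset of each edge-gadget, then apply a minimality-repair argument (as in Lemmas~\ref{lem:ur1}, \ref{lem:sr1}) showing that only a bounded number of gadget vertices get removed; (ii) conversely, from any minimal fvs of $G'$, argue via private cycles that its restriction to $V(G)$ must contain a vertex cover of $G$ (each edge-gadget cycle forces one of $u,v$ in), and that the number of gadget vertices it can include is governed by how many base vertices are left out, yielding a lower bound on $\alpha(G)$. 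Combining (i) and (ii) gives an exact correspondence $\mmfvs(G') = f(\alpha(G))$ for an explicit affine-type function $f$, so deciding \textsc{Max Min FVS} decides \textsc{Independent Set} on planar cubic graphs.

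The main obstacle I anticipate is engineering the gadget so that all three constraints — planarity, bipartiteness, and $\Delta\le 6$ — hold simultaneously while still forcing the vertex-cover behavior and keeping the graph reduced (no degree-$\le 1$ vertices, no contractible degree-$2$ edges) so that the counting is clean. Bipartiteness in particular is delicate, since the obvious "triangle through $u,v$" gadget is not bipartite; one must instead use an even cycle (e.g. a $4$- or $6$-cycle) through $u$, $v$ and fresh vertices, which increases vertex degrees and must be balanced against the degree bound. A secondary technical point is the minimality-repair accounting in direction (i): I would need to verify that removing redundant gadget vertices never cascades into removing base vertices and never destroys more than a controlled number of gadget vertices per edge, so that the size of the constructed minimal fvs is exactly predictable as a function of the chosen independent set. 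Once the gadget is pinned down, both directions should follow from private-cycle arguments in the same spirit as the lemmas already established in the paper.
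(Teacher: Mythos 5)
Your high-level plan---reduce from a vertex-cover-type problem on degree-$3$ planar bipartite graphs, replace each edge $uv$ by an even cycle through $u$, $v$, and two fresh vertices, and relate the sizes via a private-cycle argument---is the right starting point, and your instinct to use a $4$-cycle rather than a triangle for bipartiteness is correct (the paper indeed reduces from \textsc{Max Min VC} on planar bipartite graphs with $\Delta = 3$, via \textsc{Min Independent Dominating Set} \cite{ZZ1995}, and attaches to each edge $e=(u,v)$ a $4$-cycle $u, e^{(1)}, e^{(2)}, v$). However, there is a genuine gap: the bare $4$-cycle gadget does \emph{not} force a maximum minimal fvs to select a base endpoint. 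A minimal fvs may hit the cycle at $e^{(1)}$ alone, with $u, e^{(2)}, v$ all in the forest and the $4$-cycle itself serving as $e^{(1)}$'s private cycle; so your claim that ``the only way to hit all these cycles while keeping minimality is to select one endpoint of each edge'' is false, and the reverse direction of your reduction breaks. You also waver between sources (\textsc{Independent Set}, \textsc{Vertex Cover}, \textsc{Max Min VC}); the clean correspondence is $\mathrm{mmvc}(G) \ge k \iff \mmfvs(G') \ge k + 4|E|$, which forces the source to be \textsc{Max Min VC} rather than $\alpha(G)$.

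The missing ingredient is a padding sub-gadget. The paper hangs two additional $4$-cycles off each of $e^{(1)}$ and $e^{(2)}$, giving $\Delta(G') = 6$. These hanging cycles do two things at once: (i) they contribute exactly $4$ vertices per edge to every canonical solution, yielding the additive $4|E|$ shift independently of which endpoint covers $e$; and (ii) they enable an exchange argument for the reverse direction---whenever a minimal fvs $S'$ contains $e^{(1)}$ or $e^{(2)}$, one can replace it with two hanging-cycle vertices (and possibly one base endpoint) to obtain a minimal fvs that is at least as large and avoids the connecting vertices entirely, after which $S' \cap V(G)$ can be shown to be a minimal vertex cover of $G$ (minimality uses that a vertex $u$ with $N_G[u] \subseteq S'$ would have no private cycle). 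Without the hanging cycles neither the size accounting nor the exchange normalization goes through, so the gadget as you describe it needs to be augmented before the counting can ``close.''
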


\begin{proof} We give a reduction from \textsc{Max Min VC},
which is NP-hard on planar bipartite graphs of maximum degree 3 \cite{ZZ1995}.
Note that the NP-hardness in \cite{ZZ1995} is stated for \textsc{Minimum
Independent Dominating Set}, but any independent dominating set is also a
maximal independent set (and vice-versa) and the complement of the minimum
maximal independent set of any graph is a maximum minimal vertex cover. Thus,
we also obtain NP-hardness for \textsc{Max Min VC} on the same instances.

We are given a graph $G=(V,E)$.  For each edge $e=(u,v)\in E$, we add a path of
length three from $u$ to $v$ going through two new vertices $e^{(1)},e^{(2)}$
(see Figure \ref{fig_core_K6}).  Note that $u,e^{(1)},e^{(2)},v$ form a cycle
of length 4.  Then we add two cycles of length 4,
$e^{(i)},c^{(i)}_{e1},c^{(i)}_{e2},c^{(i)}_{e3}$ and
$e^{(i)},c^{(i)}_{e4},c^{(i)}_{e5},c^{(i)}_{e6}$  for $i\in \{1,2\}$.  Let
$G'=(V',E')$ be the constructed graph.  Because  $\Delta(G)=3$, we have
$\Delta(G')=6$.  Moreover, since $G$ is planar and bipartite, $G'$ is also
planar and bipartite.  We will show that there is a minimal vertex cover of
size at least $k$ in $G$ if and only if there is a minimal feedback vertex set
of size at least $k+4|E|$ in $G'$.

\begin{figure}[tbp]
    \centering
    \includegraphics[width=8cm]{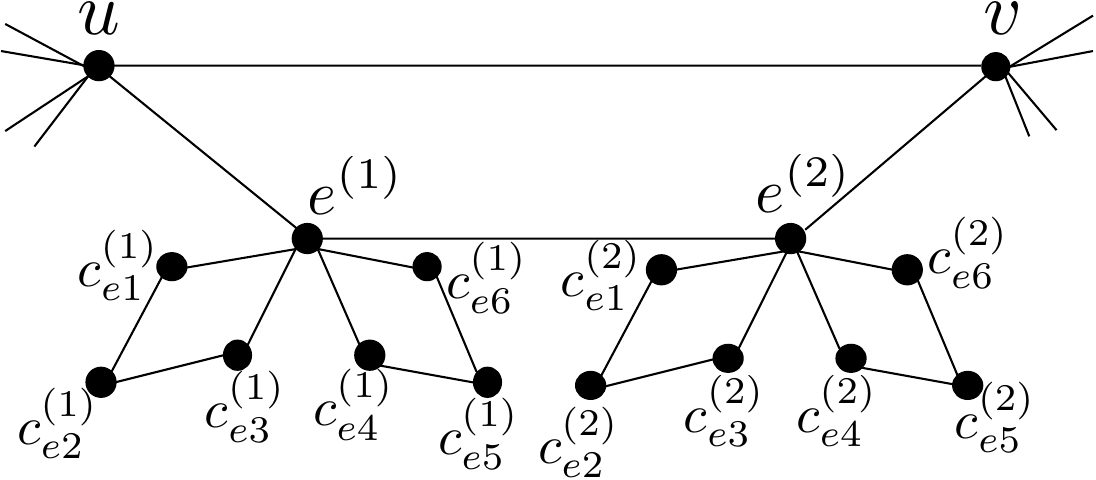}
    \caption{The edge gadget of $e=(u,v)$ in the constructed graph $G$.}
    \label{fig_core_K6}
\end{figure}

Given a minimal vertex cover $S$ of size at least $k$ in $G$, we construct the
set $S'=S\cup \bigcup_{e\in E}
\{c_{e1}^{(1)},c_{e4}^{(1)},c_{e1}^{(2)},c_{e4}^{(2)}\}$.  Then $|S'|\ge
k+4|E|$. Let us first argue that $S'$ is an fvs of $G'$. For each $e=(u,v)\in
E$ we have at least one of $u,v\in S$, without loss of generality let $u\in S$.
Now in $G'[V'\setminus S']$ the edges $(e^{(1)},e^{(2)})$ and $(e^{(2)},v)$ are
bridges and therefore cannot be part of any cycle. The remaining cycles going
through $e^{(1)},e^{(2)}$ are handled by
$\{c_{e1}^{(1)},c_{e4}^{(1)},c_{e1}^{(2)},c_{e4}^{(2)}\}$.  Furthermore, since
$G'[V\setminus S]$ is an independent set, it is also acyclic. To see that $S'$
is a minimal fvs, we remark that for each $c_{e1}^{(i)},c_{e4}^{(i)}$ contained
in $S'$ there is a private cycle in $G'[V'\setminus S']$. We also note that
since $S$ is a minimal vertex cover of $G$, for each $u\in S$, there exists
$v\not\in S$ with $e=(u,v)\in E$. This means that $u$ has the private cycle
formed by $\{u,v,e^{(1)},e^{(2)}\}$ in $G'[V'\setminus S']$.  Therefore, $S'$
is a minimal fvs.

Conversely, suppose we are given a minimal fvs $S'$ of $G'$ with $|S'|\ge
k+4|E|$.  We will edit $S'$ so that is contains only vertices in $V'\setminus
\bigcup_{e\in E}\{e^{(1)},e^{(2)}\}$, without decreasing its size. 

First, suppose $e^{(1)},e^{(2)}\in S'$, for some $e\in E$. We construct a new
minimal fvs $S''=S'\setminus\{e^{(2)}\}\cup\{c_{e1}^{(2)}, c_{e4}^{(2)}\}$
which is larger that $S'$, since by minimality we have $c_{ei}^{(2)}\not\in S'$
for $i\in\{1,\ldots,6\}$.  It is not hard to see that $S''$ is indeed an fvs,
as no cycle can go through $e^{(2)}$ in $G'[V'\setminus S'']$.  The two
vertices we added have a private cycle, while all vertices of $S'\cap S''$
retain their private cycles, so $S''$ is a minimal fvs. As a result in the
remainder we assume that $S'$ contains at most one of $\{e^{(1)}, e^{(2)}\}$
for all $e\in E$.

Suppose now that for some $e=(u,v)\in E$, we have $S'\cap\{u,v\}\neq\emptyset$
and $S'\cap\{e^{(1)},e^{(2)}\}\neq \emptyset$. Without loss of generality, let
$e^{(1)}\in S'$. We set
$S''=S'\setminus\{e^{(1)}\}\cup\{c_{e1}^{(1)},c_{e4}^{(1)}\}$ and claim that
$S''$ is a larger minimal fvs than $S$. Indeed, no cycle goes through $e^{(1)}$
in $G'[V'\setminus S'']$, the new vertices we added to $S'$ have private
cycles, and all vertices of $S'\cap S''$ retain their private cycles in
$G'[V'\setminus S'']$.  Therefore, we can now assume that if for some
$e=(u,v)\in E$ we have $S'\cap \{e^{(1)},e^{(2)}\}\neq \emptyset$ then
$u,v\not\in S'$.

For the remaining case, suppose that for some $e=(u,v)\in E$ we have
$u,v\not\in S'$ and (without loss of generality) $e^{(1)}\in S'$. We construct
the set $S''=S'\setminus\{e^{(1)}\}\cup\{c_{e1}^{(1)},c_{e4}^{(2)},u\}$. Note
that $|S''|\ge |S'|+2$. It is not hard to see that $S''$ is an fvs, since by
adding $c_{e1}^{(1)}, c_{e4}^{(1)}, v$ to our set we have hit all cycles
containing $e^{(1)}$ in $G'$. The problem now is that $S''$ is not necessarily
minimal. We greedily delete vertices from $S''$ to obtain a minimal fvs $S^*$.
We claim that in this process we cannot delete more than two vertices, that is
$|S^*\setminus S''|\le 2$.  To see this, we first note that
$c_{e1}^{(1)},c_{e4}^{(2)},u$ cannot be removed from $S''$ as they have private
cycles in $G[V'\setminus S'']$. Suppose now that $w_1\in S''\setminus S^*$ is
the first vertex we removed from $S''$, so $G'[(V'\setminus S'')\cup \{w_1\}]$
is acyclic.  This vertex must have had a private cycle in $G'[V'\setminus S']$,
which was necessarily going through $u$. Therefore, $G'[(V'\setminus S'')\cup
\{w_1\}]$ has a path connecting two neighbors of $u$ and this path does not
exist in $G'[(V'\setminus S'')]$. With a similar reasoning, removing another
vertex $w_2\in S''$ from the fvs will create a second path between neighbors of
$u$ in the induced forest. We conclude that this cannot happen a third time,
since $|N(u)|\le 3$, and if we create three paths between neighbors of $u$,
this will create a cycle. As a result, $|S^*|\ge |S'|$. We assume in the
remainder that $S'$ does not contain $e^{(1)},e^{(2)}$ for any $e\in E$.

Now, given a minimal fvs $S'$ of $G'$ with $|S'|\ge k+4|E|$ and $S'\cap
(\cup_{e\in E} \{e^{(1)},e^{(2)}\}) = \emptyset$ we set $S=S'\cap V$ and claim
that $S$ is a minimal vertex cover of $G$ with $|S|\ge k$. Indeed $S$ is a
vertex cover, as for each $e=(u,v)\in E$, if $u,v\not\in S'$ then we would get
the cycle formed by $\{u,v,e^{(1)},e^{(2)}\}$. To see that $S$ is minimal,
suppose $N_G[u]\subseteq S'$. We claim that in that case $u$ has no private
cycle in $G'[V'\setminus S']$ (this can be seen by deleting all bridges in
$G'[V'\setminus S']$, which leaves $u$ isolated). This contradicts the
minimality of $S'$ as an fvs of $G'$. Finally, we argue that $|S'\setminus
V|\le 4|E|$, which gives the desired bound on $|S|$. Consider an $e=(u,v)\in
E$. $S'$ cannot contain more than one vertex among
$c_{e1}^{(1)},c_{e2}^{(1)},c_{e3}^{(1)}$, since any of these vertices hits the
cycle that goes through the others. With similar reasoning for the three other
length-four cycles we conclude that $S'$ contains at most $4$ vertices for each
edge $e\in E$.  \end{proof}

\section{Conclusions}

We have essentially settled the approximability of \textsc{Max Min FVS} for
polynomial and sub-exponential time, up to sub-polynomial factors in the
exponent of the running time. It would be interesting to see if the running
time of our sub-exponential approximation algorithm can be improved by
poly-logarithmic factors in the exponent, as in \cite{BansalCLNN19}. In
particular, improving the running time to $2^{O(n/r^{3/2})}$ seems feasible,
but would likely require a version of Lemma \ref{lem:smallfvs} which uses more
sophisticated techniques, such as Cut\&Count
\cite{BodlaenderCKN15,Cygan2015,CyganNPPRW11}. 
For the parameterized complexity perspective, we gave a cubic kernel when parameterized by solution size.
A natural direction of future work is the deep analysis of  parameterized complexity of \textsc{Max Min FVS}.
Finally, we showed that  \textsc{Max Min FVS} is NP-hard even on graphs of maximum degree 6.
An interesting open problem is the complexity on graphs of maximum degree 3, where \textsc{Min FVS} can be solved in polynomial time \cite{UENO1988}.

Another problem of similar spirit which deserves to be studied is \textsc{Max
Min OCT}, where an odd cycle transversal (OCT) is a set of vertices whose
removal makes the graph bipartite. This problem could also potentially be
``between'' \textsc{Max Min VC} and \textsc{UDS}, but obtaining a
$n^{1-\epsilon}$ approximation for it seems much more challenging than for
\textsc{Max Min FVS}.

\section*{Acknowledgments}

This work is partially supported by PRC CNRS JSPS project PARAGA and by JSPS KAKENHI Grant Number JP19K21537.

\bibliographystyle{plain}
\bibliography{ref}

\end{document}